\newtheorem{theorem}{Theorem}
\newtheorem{lemma}{Lemma}
\newtheorem{corollary}{Corollary}
\newtheorem{proposition}{Proposition}
\newtheorem{example}{Example}
\newtheorem{conjecture}{Conjecture}
\theoremstyle{definition}
\newtheorem{definition}{Definition}
\theoremstyle{remark}
\newtheorem{remark}{Remark}
\title{Feynman checkers: number-theoretic properties}
\author{F. Kuyanov\footnote{feodor.kuyanov@gmail.com; National Research University Higher School of Economics, Moscow, Russian Federation}\hspace{5pt} and A. Slizkov\footnote{elexunix@gmail.com; National Research University Higher School of Economics, Moscow, Russian Federation}}
\date{}
\begin{document}

\maketitle

\begin{abstract}
We study Feynman checkers, an elementary model of electron motion introduced by R. Feynman. In this model, a checker moves on a checkerboard, and we count the turns. Feynman checkers are also known as a one-dimensional quantum walk. We prove some new number-theoretic results in this model, for example, sign alternation of the real and imaginary parts of the electron wave function in a specific area. All our results can be stated in terms of Young diagrams, namely, we compare the number of Young diagrams with an odd and an even number of steps.
\end{abstract}

{\bf Keywords:} Young diagram, Feynman checkers, quantum walk, dip

{\bf MSC2010:} 82B20, 81T25

\section{Introduction}

In this work, we prove some new and easy-to-state results on Young diagrams, connected with the ``Feynman checkers'' model from quantum mechanics, introduced by R. Feynman \cite[Problem 2.6]{Feynman-Gibbs}.

\begin{wrapfigure}{r}{90pt}
    \centering
    \includegraphics[width=60pt]{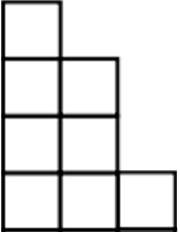}
    \caption{A Young diagram of size $3 \times 4$ with 3 steps.}
    \label{young-figure}
\end{wrapfigure}

Informally, a Young diagram is a set of checkered stripes aligned by their left sides, so that their lengths increase from top to bottom (see Figure \ref{young-figure}). The \emph{height} $h$ and \emph{width} $w$ are the number of stripes and the maximal stripe length respectively. \emph{The number of steps} (or \emph{outer corners}) in a Young diagram is the number of distinct stripe lengths. This paper is devoted to the following question: \emph{are there more Young diagrams of given size $w \times h$ with an odd or an even number of steps?} This question turns out to be highly nontrivial (see Figure \ref{a1-sign-alternation-figure}). In this figure, we observe different regimes near the middle and the sides of the angle.

Formally, a \emph{Young diagram of size $w \times h$} is a sequence of positive integers $x_1 \le \hdots \le x_h = w$; the \emph{number of steps} is the number of distinct integers among $x_1, \hdots, x_h$. The following two new theorems highlight both regimes in Figure \ref{a1-sign-alternation-figure}.

\begin{theorem}
\label{young-outside-theorem}
If $h/w > 3 + 2\sqrt{2}$, then the number of Young diagrams of size $w \times h$ with an odd number of steps exceeds the one with an even number of steps, if and only if $w$ is odd.
\end{theorem}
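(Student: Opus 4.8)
The plan is to turn the question into a sign determination for a family of orthogonal polynomials and then control their largest zero.

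First I would count Young diagrams by their number of steps. A diagram $1 \le x_1 \le \cdots \le x_h = w$ with exactly $s$ steps is obtained by choosing its $s$ distinct values (the largest being forced to equal $w$), which gives $\binom{w-1}{s-1}$ possibilities, together with the multiplicities of those values, i.e. a composition of $h$ into $s$ positive parts, which gives $\binom{h-1}{s-1}$ possibilities. Hence the number of such diagrams is $\binom{w-1}{s-1}\binom{h-1}{s-1}$, and the difference between the odd-step and even-step counts is $D(w,h) := \sum_{k\ge 0}(-1)^k\binom{w-1}{k}\binom{h-1}{k}$. The theorem is then equivalent to the claim that $\operatorname{sign} D(w,h) = (-1)^{w-1}$ whenever $h/w > 3+2\sqrt2$, since $(-1)^{w-1}>0$ exactly when $w$ is odd (and $D\neq 0$ in this range).

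Next I would regard $D(w,h)$ as a polynomial in $h$ of degree $w-1$, whose leading coefficient is $(-1)^{w-1}/(w-1)!$; thus its sign for large $h$ is already $(-1)^{w-1}$, and the entire difficulty is to rule out a sign change once $h/w$ exceeds $3+2\sqrt2$. A short generating-function computation yields $\sum_{h\ge 1}D(w,h)\,z^{h-1} = (1-2z)^{w-1}/(1-z)^{w}$, and from this (or from the contiguous relations of ${}_2F_1(1-w,1-h;1;-1)$) one recognizes $D(w,h)$ as a nonzero scalar multiple of the Meixner polynomial $M_{w-1}(h-1;\beta,c)$ with $\beta=1$, $c=1/2$, orthogonal on $\{0,1,2,\dots\}$ against the geometric weight $2^{-x}$. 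I would record its monic three-term recurrence coefficients, which for these parameters are $b_n = 3n+1$ and $a_n = 2n^2$.

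The central step is then to locate the zeros. Since the $D(w,\cdot)$ are, up to scale, orthogonal polynomials, their zeros are real, simple, and equal to the eigenvalues of the truncated symmetric Jacobi matrix $J_{w-1}$ with diagonal entries $b_n = 3n+1$ and off-diagonal entries $\sqrt{a_n} = n\sqrt2$. Applying Gershgorin's theorem to $J_{w-1}$ bounds every zero $x=h-1$ by $\max_n\bigl(b_n+\sqrt{a_n}+\sqrt{a_{n+1}}\bigr) = \max_n\bigl(3n+1+(2n+1)\sqrt2\bigr)$, attained at $n=w-2$; a direct estimate gives the largest zero in the variable $h$ at most $3w-4+(2w-3)\sqrt2 < (3+2\sqrt2)w$ for every $w$. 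This also explains the constant: $3+2\sqrt2=(1+\sqrt2)^2$ is precisely the growth rate of $b_n+2\sqrt{a_n}$, and equivalently the ratio at which the two saddle points of the Cauchy integral for $D(w,h)$ collide at $z=1/\sqrt2$ and leave the real axis. Consequently $D(w,\cdot)$ has no sign change beyond $(3+2\sqrt2)w$, so $\operatorname{sign} D(w,h)=(-1)^{w-1}$ there, which is the assertion.

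I expect the main obstacle to be making this zero bound both sharp and uniform in $w$, that is, proving the largest zero stays below $(3+2\sqrt2)w$ for all $w$ and not merely asymptotically. The Gershgorin estimate on the Jacobi matrix resolves this cleanly, but one must verify the elementary inequality $\max_n\bigl(b_n+\sqrt{a_n}+\sqrt{a_{n+1}}\bigr)+1<(3+2\sqrt2)w$ with care, checking the boundary rows $n=0$ and $n=w-2$ of the matrix as well as the first few small values of $w$, and confirming that the irrationality of $3+2\sqrt2$ keeps every integer $h>(3+2\sqrt2)w$ strictly past all zeros so that the inequality is strict.
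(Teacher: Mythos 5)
Your proof is correct, but it takes a genuinely different route from the paper. The paper deduces Theorem \ref{young-outside-theorem} from the stronger Theorem \ref{oscillation-outside-theorem} about the wave function for arbitrary $m$ and $\varepsilon$: sign alternation and decay of $\tilde a_1$ and $\tilde a_2$ in the region $x/t \ge 1/\sqrt{1+m^2\varepsilon^2}$, proved by descending induction on $x$ at fixed $t$ using the equal-time recurrence (Proposition \ref{recurrence-proposition}) with the boundary values of Example \ref{boundary-values-example} as the base; Theorem \ref{young-outside-theorem} is then the substitution $m=\varepsilon=1$, $x=h-w$, $t=h+w-1$. You instead work entirely on the combinatorial side: the count $\binom{w-1}{s-1}\binom{h-1}{s-1}$ of diagrams with exactly $s$ steps, the generating function $(1-2z)^{w-1}/(1-z)^{w}$, the identification of the difference with the Meixner polynomial ${}_2F_1(1-w,1-h;1;-1)$ orthogonal against the weight $2^{-x}$, and the monic Jacobi parameters $b_n=3n+1$, $a_n=2n^2$ are all correct (I verified them), and the zeros-as-eigenvalues identification plus Gershgorin does place every zero, in the variable $h$, below $3w-4+(2w-3)\sqrt{2} < (3+2\sqrt{2})w$, so beyond that point the polynomial carries the sign of its leading coefficient $(-1)^{w-1}/(w-1)!$, which is the assertion. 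Two minor points: the last Gershgorin row has radius only $(w-2)\sqrt{2}$ rather than $(2w-3)\sqrt{2}$, but overestimating it only loosens your bound harmlessly; and the irrationality remark is unnecessary, since your bound is already strictly below $(3+2\sqrt{2})w$. As for what each approach buys: the paper's induction, once the recurrence of \cite{SU-22} is granted, is elementary and simultaneously yields the damping inequality $|\tilde a_k(x-\varepsilon,t,m,\varepsilon)|>|\tilde a_k(x+\varepsilon,t,m,\varepsilon)|$, the imaginary part $\tilde a_2$, and arbitrary $m,\varepsilon$; your argument needs no quantum-walk input at all, exhibits the difference as an orthogonal polynomial with real simple zeros, and explains the constant conceptually --- $3+2\sqrt{2}$ is the growth rate of $b_n+2\sqrt{a_n}$, i.e.\ the asymptotic position of the largest Meixner zero --- which also points toward the sharpness statement (Proposition \ref{lowerbound-sharpness-proposition}), although converting your Gershgorin upper bound into the matching lower bound on the largest zero would require an additional argument.
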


\begin{theorem}
\label{young-middle-theorem}
For any integer $d$ there exists $w_0$ such that for every $w > w_0$ the number of Young diagrams of size $w \times (w + d)$ with an odd number of steps exceeds the one with an even number of steps, if and only if $2w + d$ is $1$, $2$, $3$, or $4$ modulo $8$.
\end{theorem}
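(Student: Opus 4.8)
The plan is to translate the statement into a signed count and analyse it by coefficient extraction. Write $A(w,h)$ for the number of Young diagrams of size $w\times h$ with an odd number of steps minus the number with an even number of steps. A diagram with exactly $k$ steps is determined by its $k$ distinct part-sizes (the largest being $w$, so $\binom{w-1}{k-1}$ choices) together with multiplicities summing to $h$ (so $\binom{h-1}{k-1}$ choices), hence
\[ A(w,h)=\sum_{j\ge0}(-1)^{j}\binom{w-1}{j}\binom{h-1}{j}. \]
Transposition of diagrams preserves the number of outer corners, so $A(w,h)=A(h,w)$, and we may assume $d\ge0$. The goal becomes to prove that $A(w,w+d)>0$ exactly when $2w+d\equiv1,2,3,4\pmod8$.

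The key step is the identity
\[ A(w,w+d)=[z^{w-1}]\,(z^{2}-1)^{w-1}(1+z)^{d}, \]
obtained from the generating-function reading of the sum above. Since $d$ is fixed, expanding $(1+z)^{d}$ turns the right-hand side into a finite alternating combination of $O(d)$ near-central binomial coefficients,
\[ A(w,w+d)=\sum_{r\equiv w-1\,(2)}(-1)^{(w-1+r)/2}\binom{d}{r}\binom{w-1}{(w-1+r)/2}, \]
where $r$ runs over $0,\dots,d$. This finiteness (for fixed $d$) is what makes the large-$w$ analysis elementary.

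Next I would factor out the central coefficient $\binom{w-1}{\lfloor(w-1)/2\rfloor}$. For each fixed $r$ the ratio $\binom{w-1}{(w-1+r)/2}\big/\binom{w-1}{\lfloor(w-1)/2\rfloor}$ tends to $1$, so to leading order $A(w,w+d)$ equals this central coefficient times $\sum_{r}(-1)^{(w-1+r)/2}\binom{d}{r}=\mathrm{Re}\,[\,i^{\,w-1}(1+i)^{d}\,]=2^{d/2}\cos\!\big(\tfrac{\pi}{4}(2w+d-2)\big)$, where $i=\sqrt{-1}$. As the central coefficient is positive, the sign of $A$ is governed by this cosine whenever it does not vanish: it is positive for $2w+d\equiv1,2,3$ and negative for $2w+d\equiv5,6,7\pmod8$, which already settles six of the eight residue classes.

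The main obstacle is the two classes $2w+d\equiv0,4\pmod8$, where the leading cosine is exactly $0$; this is the \emph{dip}, and here the sign is decided by the first correction. Using $\binom{w-1}{(w-1+r)/2}=\binom{w-1}{\lfloor(w-1)/2\rfloor}\big(1-\tfrac{r^{2}}{2(w-1)}+O(w^{-2})\big)$, valid uniformly since only finitely many $r$ occur, the next term is proportional to $-\tfrac{1}{2(w-1)}\,\mathrm{Re}\,[\,i^{\,w-1}S\,]$ with $S=\sum_r r^{2}i^{\,r}\binom{d}{r}$, which one computes as $S=d(i-d)(1+i)^{d-2}$. Evaluating $\mathrm{Re}\,[\,i^{\,w-1}S\,]$ on the two classes gives $-d^{2}2^{(d-2)/2}$ when $2w+d\equiv4$ and $+d^{2}2^{(d-2)/2}$ when $2w+d\equiv0$; multiplied by $-\tfrac{1}{2(w-1)}$ this makes $A$ strictly positive for $2w+d\equiv4$ and strictly negative for $2w+d\equiv0$. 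Combining the two regimes yields $A(w,w+d)>0$ precisely for $2w+d\equiv1,2,3,4\pmod8$, as claimed. The delicate points are the sign asymmetry between the classes $0$ and $4$ at the level of the correction term (the correction vanishes identically only in the degenerate case $d=0$, which must be inspected separately), and making the error estimates in the central-coefficient expansion uniform so that the finitely many terms may be summed; both are manageable because $d$ is fixed.
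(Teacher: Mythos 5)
Your argument is correct in substance (I verified the key identity $A(w,w+d)=\sum_j(-1)^j\binom{w-1}{j}\binom{w+d-1}{j}=[z^{w-1}](z^2-1)^{w-1}(1+z)^d$, the value $S=d(i-d)(1+i)^{d-2}$, and the resulting signs in the classes $0,4\bmod 8$), but it takes a genuinely different route from the paper. The paper never touches the binomial sum: it restates the claim as Theorem \ref{oscillation-middle-theorem} on $\mathrm{sgn}(\tilde a_1(x,t))$ with $x=d$, $t=2w+d-1$, disposes of the classes $t\not\equiv 3\pmod 4$ by quoting the nontrivial asymptotic formula of Proposition \ref{asymptotic-a1-proposition}, and handles the remaining class $t\equiv 3\pmod 4$ --- exactly your two ``dip'' classes $2w+d\equiv 0,4\pmod 8$ --- by an induction on $x$ (Lemma \ref{neighbour-relation-lemma}) based on the equal-time recurrence of Proposition \ref{recurrence-proposition} and the exact values $\tilde a_1(0,4n+3)=0$, $\tilde a_1(2,4n+3)\neq 0$ from Proposition \ref{middle-values-proposition}. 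Your leading term $2^{d/2}\cos\bigl(\tfrac{\pi}{4}(2w+d-2)\bigr)$ is literally the paper's $\sin\bigl(\tfrac{\pi}{4}(t+1)\bigr)$, and your $1/w$ correction term replaces the paper's recurrence induction. What your approach buys: it is self-contained and elementary (no imported asymptotics for the wave function), and it yields the actual asymptotic size of the difference in the degenerate classes, roughly $d^2 2^{(d-2)/2}\binom{w-1}{\lfloor (w-1)/2\rfloor}\big/2w$, whereas the paper's argument gives only the sign. What the paper's route buys is uniformity of machinery: the same propositions serve its other theorems (and arbitrary $m,\varepsilon$), while your computation is specific to $m=\varepsilon=1$.

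Two caveats. First, a parity slip: the expansion $\binom{w-1}{(w-1+r)/2}=\binom{w-1}{\lfloor(w-1)/2\rfloor}\bigl(1-\tfrac{r^{2}}{2(w-1)}+O(w^{-2})\bigr)$ is valid only when $w-1$ and $r$ are even; when they are odd (which does occur in the dip classes, e.g.\ for $d\equiv 0\pmod 4$) the correct numerator is $r^2-1$. This is harmless because the discrepancy equals $\tfrac{1}{2(w-1)}$ times the leading alternating sum, which vanishes identically in the two classes where you invoke the correction, so your computed correction and signs survive; but the formula should be stated with the parity case distinction. Second, and more seriously, the deferred case $d=0$ is not ``manageable'' --- it is where the statement fails. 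For $d=0$ and $w$ even your own identity gives $A(w,w)=[z^{w-1}](z^2-1)^{w-1}=0$ (an odd-degree coefficient of an even polynomial; cf.\ Proposition \ref{middle-values-proposition} and the discussion in the introduction), so for $2w\equiv 4\pmod 8$ the odd-step count does \emph{not} exceed the even-step count, contradicting the ``if'' direction. You should therefore prove the theorem for $d\neq 0$ and record that it is false for $d=0$. Note that the paper shares this defect: its restatement Theorem \ref{oscillation-middle-theorem} explicitly assumes $x\neq 0$, so the substitution $x=d$ in its proof of Theorem \ref{young-middle-theorem} silently requires $d\neq 0$; your method has the merit of making the failure at $d=0$ completely explicit.
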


\begin{remark}
In Theorem \ref{young-outside-theorem}, the number $3 + 2\sqrt{2}$ is a sharp estimate by Proposition \ref{lowerbound-sharpness-proposition}.
\end{remark}

As a next step, it is natural to study the \emph{difference} between the number of Young diagrams of size $w \times h$ with an odd and an even number of steps. Let us discuss a few known observations. For $h = w$ even, the difference vanishes; for $h = w = 2n + 1$ odd, it is $(-1)^n \binom{2n}{n}$ (see Proposition \ref{middle-values-proposition} below). Such 4-periodicity roughly remains for $h$ close to $w$. For fixed half-perimeter $h + w$, the difference strongly oscillates as $h/w$ increases, attains a peak at $h/w \approx 3 + 2\sqrt{2}$, and then plummets to very small values (see Figure \ref{layer-a1-figure} and \cite[Corollary 2 and Theorems 2-4]{SU-22}). What is particularly notable, for some $h/w < 3 + 2\sqrt{2}$ the oscillation is weaker than in the vicinity, and such ``dips'' form a fractal structure for large $h + w$ (see Figure \ref{dips-a1-figure}). S. Nechaev (private communication) has posed the problem to find the positions of the ``dips''.

\begin{figure}[H]
    \centering
    \includegraphics[width=\textwidth]{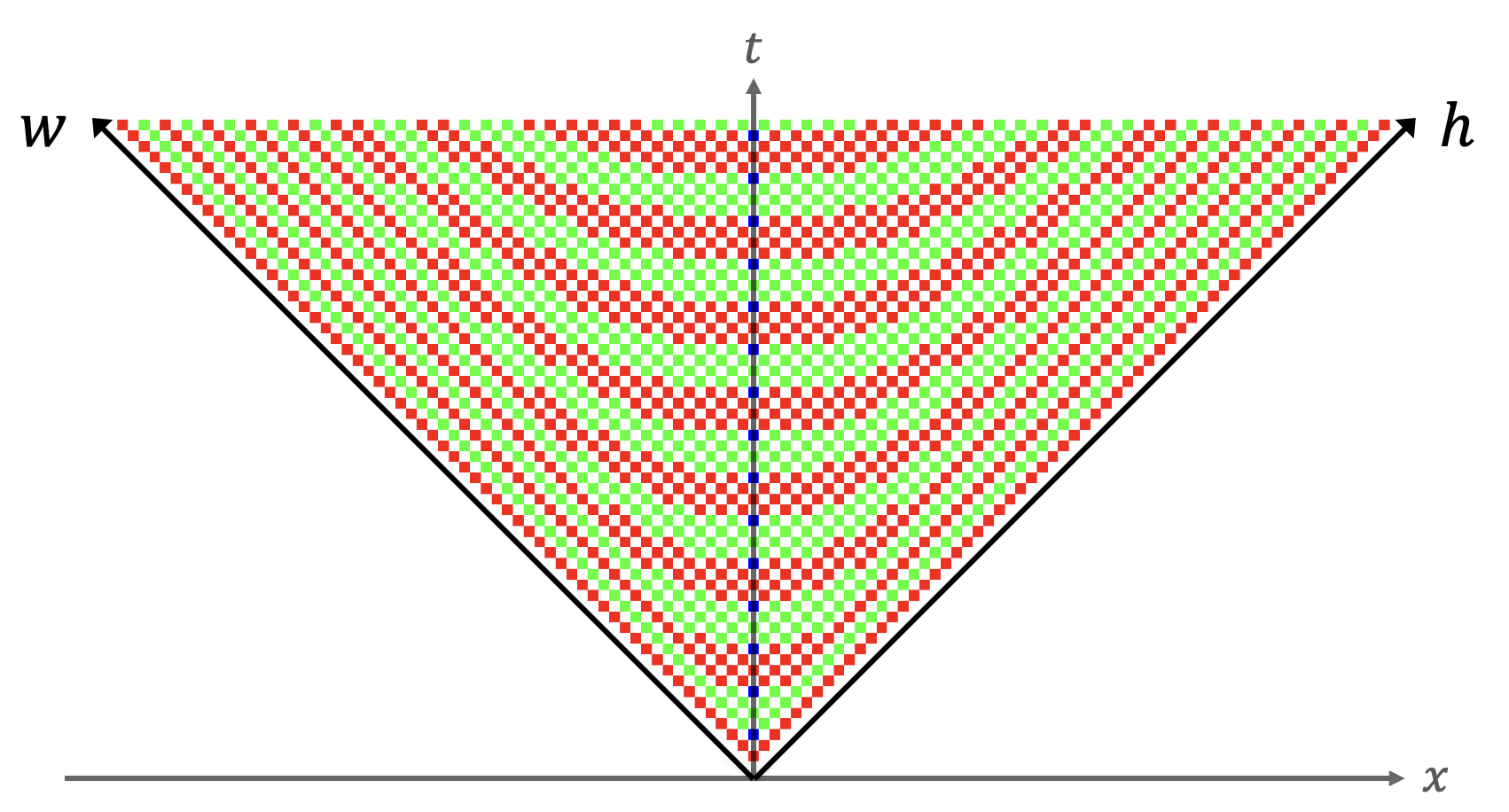}
    \caption{The sign of the difference between the number of Young diagrams of size $w \times h$ with an odd and an even number of steps. Red depicts $+1$, green depicts $-1$, blue depicts $0$.}
    \label{a1-sign-alternation-figure}
\end{figure}

\begin{figure}[H]
    \centering
    \includegraphics[width=\textwidth]{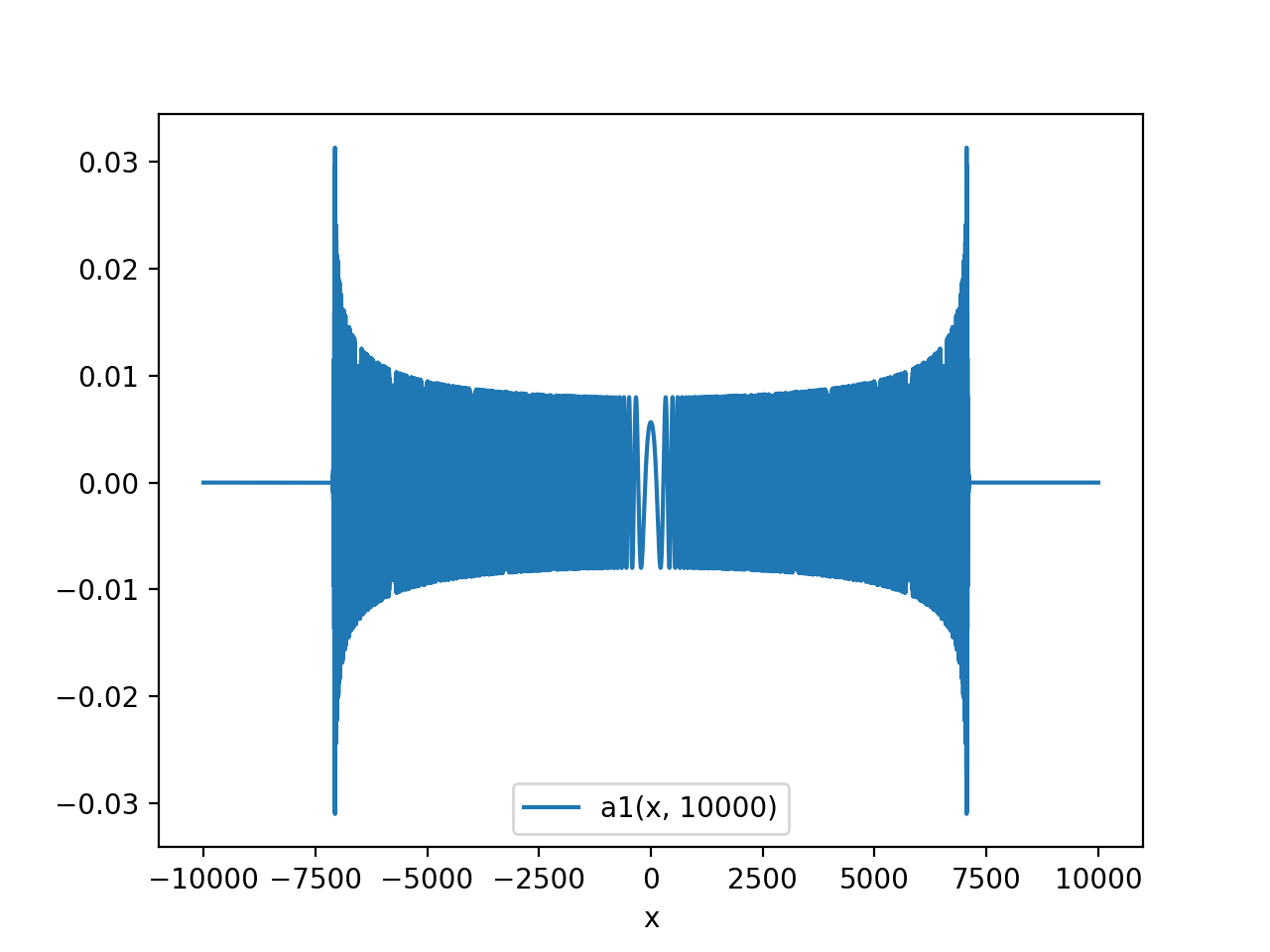}
    \caption{Normalized difference $\tilde a_1(x, t)$ between the number of Young diagrams of size $w \times h$ with an odd and an even number of steps, where $x = h - w$, $t = h + w - 1 = 10^4$.}
    \label{layer-a1-figure}
\end{figure}

\begin{figure}[H]
    \centering
    \includegraphics[width=\textwidth]{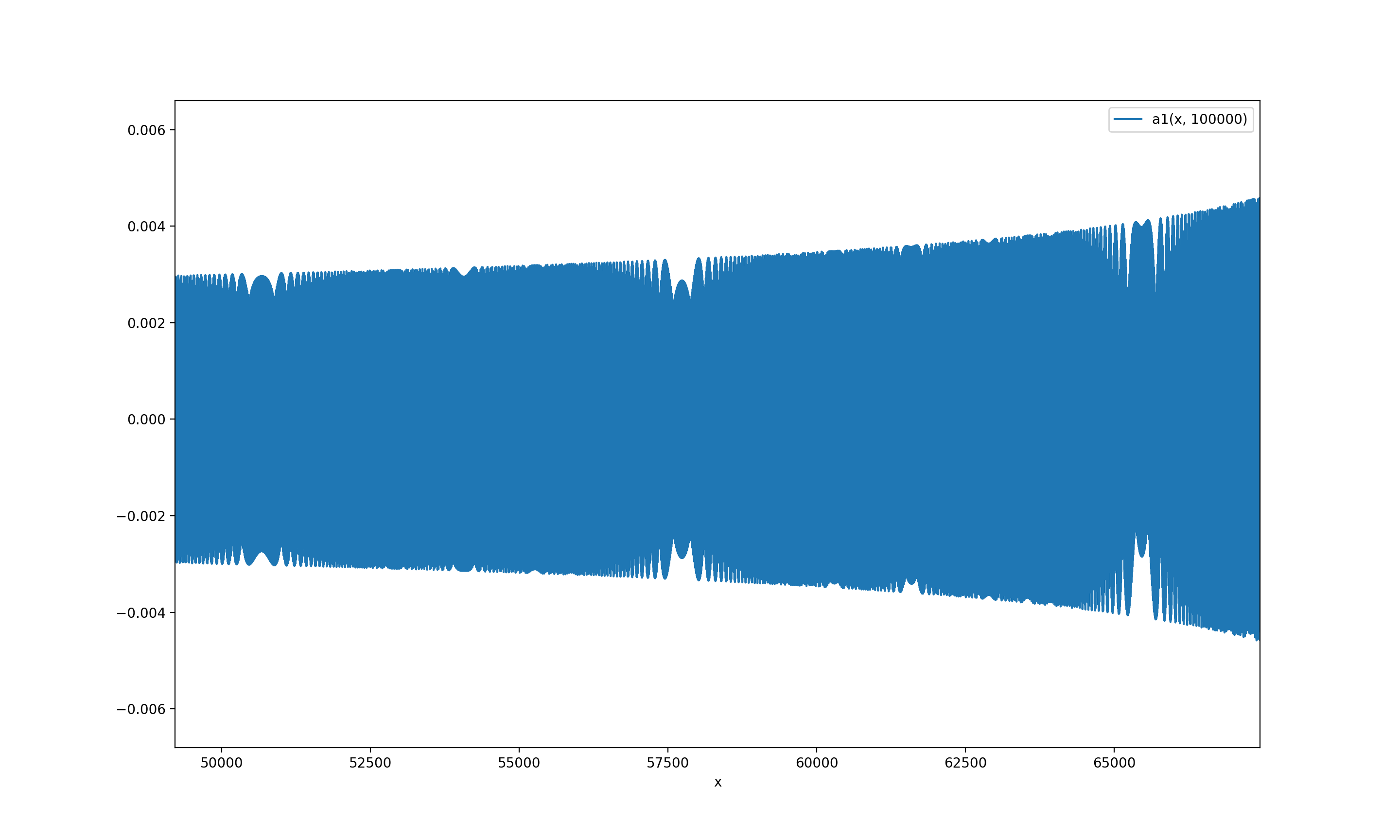}
    \caption{Fractal structure of the ``dips'' for $\tilde a_1(x, 10^5)$. Cf. the popcorn function \cite[Figure 1b]{Nechaev-Polovnikov-18}.}
    \label{dips-a1-figure}
\end{figure}

It turns out that this difference is proportional to the real part of the wave function in the simplest model of electron motion, known as Feynman checkers or a one-dimensional quantum walk (see \cite{Feynman-Gibbs}, \cite{SU-22}, and Definition \ref{checkers-definition}); this connection is made in Section \ref{background-section} after Definition \ref{checkers-definition}. In Section \ref{behaviour-outside-section} we prove sign alternation and damping of the real and imaginary parts of the wave function near the angle sides in Figure \ref{a1-sign-alternation-figure}, generalizing Theorem \ref{young-outside-theorem}. Our proof uses an equal-time recurrence relation (see Proposition \ref{recurrence-proposition}), which has recently been obtained in \cite{SU-22}. Then, in Section \ref{behaviour-middle-section} we prove Theorem \ref{young-middle-theorem}, describing behaviour of the difference near the middle of the angle. In addition to the recurrence relation, the proof uses a known asymptotic formula for the wave function (see Proposition \ref{asymptotic-a1-proposition}). The leading terms of the asymptotic formula have been found in \cite[Theorem 2]{Ambainis-etal-01}, and the remainder terms have been estimated in \cite[Proposition 2.2]{Sunada-Tate-12} and \cite[Theorem 2]{SU-22}. See also \cite{Zakorko-22} where a stronger asymptotic formula has been obtained. Section \ref{dips-section} is devoted to the dips-position problem posed by S. Nechaev. We introduce a precise definition of dips and obtain an explicit formula for their positions (see Theorem \ref{dips-a1-theorem}). This formula has a remarkable physical interpretation: the dips are caused by electron diffraction on the integer lattice and occur for those electron velocities which correspond to a rational de Broglie wavelength (cf. \cite{Nechaev-Polovnikov-18}). Finally, we prove the sharpness of the lower bounds in Theorems \ref{oscillation-outside-theorem} and \ref{young-outside-theorem}.

\section{Background}
\label{background-section}

Let us give the definition of Feynman checkers. See \cite{Dmitriev-22,Ozhegov-22,SU-22} for generalizations.

\begin{definition}
\label{checkers-definition}
(see \cite[Definition 2]{SU-22})
Fix $\varepsilon > 0$ and $m \ge 0$ called \emph{lattice step} and \emph{particle mass} respectively. Consider the lattice $\varepsilon \mathbb{Z}^2 = \{(x, t) : x/\varepsilon, t/\varepsilon \in \mathbb{Z}\}$. A \emph{checker path} $s$ is a finite sequence of lattice points such that the vector from each point (except the last one) to the next one equals either $(\varepsilon, \varepsilon)$ or $(-\varepsilon, \varepsilon)$. Denote by $\mathrm{turns}(s)$ the number of points in $s$ (not the first or the last one) such that the vectors from the point to the next and the previous ones are orthogonal. For each $(x, t) \in \varepsilon \mathbb{Z}^2$, where $t > 0$, the \emph{wave function} is
\begin{align}
    \label{a-definition}
    a(x, t, m, \varepsilon) &:= (1 + m^2\varepsilon^2)^{(1 - t/\varepsilon)/2} \, i \, \sum_s (-im\varepsilon)^{\mathrm{turns}(s)},
\end{align}
where sum is over all checker paths $s$ from $(0, 0)$ to $(x, t)$ containing $(\varepsilon, \varepsilon)$. Denote
\begin{align*}
	P(x, t, m, \varepsilon) &:= |a(x, t, m, \varepsilon)|^2, \\
	\tilde a_1(x, t, m, \varepsilon) &:= \operatorname{Re} a(x, t + \varepsilon, m, \varepsilon), \\
	\tilde a_2(x, t, m, \varepsilon) &:= \operatorname{Im} a(x + \varepsilon, t + \varepsilon, m, \varepsilon).
\end{align*}
Hereafter we omit the argument $\varepsilon$, if $\varepsilon = 1$; we omit both arguments $m$ and $\varepsilon$, if $m = \varepsilon = 1$.
\end{definition}

One interprets $P(x, t, m, \varepsilon)$ as the probability to find an electron of mass $m$ in the square $\varepsilon \times \varepsilon$ with the center $(x, t)$, if the electron has been emitted from the origin. Notice that the value $m\varepsilon$, hence $P(x, t, m, \varepsilon)$, is dimensionless in the natural units, where $\hbar = c = 1$.

\begin{wrapfigure}{r}{180pt}
    \centering
    \includegraphics[width=150pt]{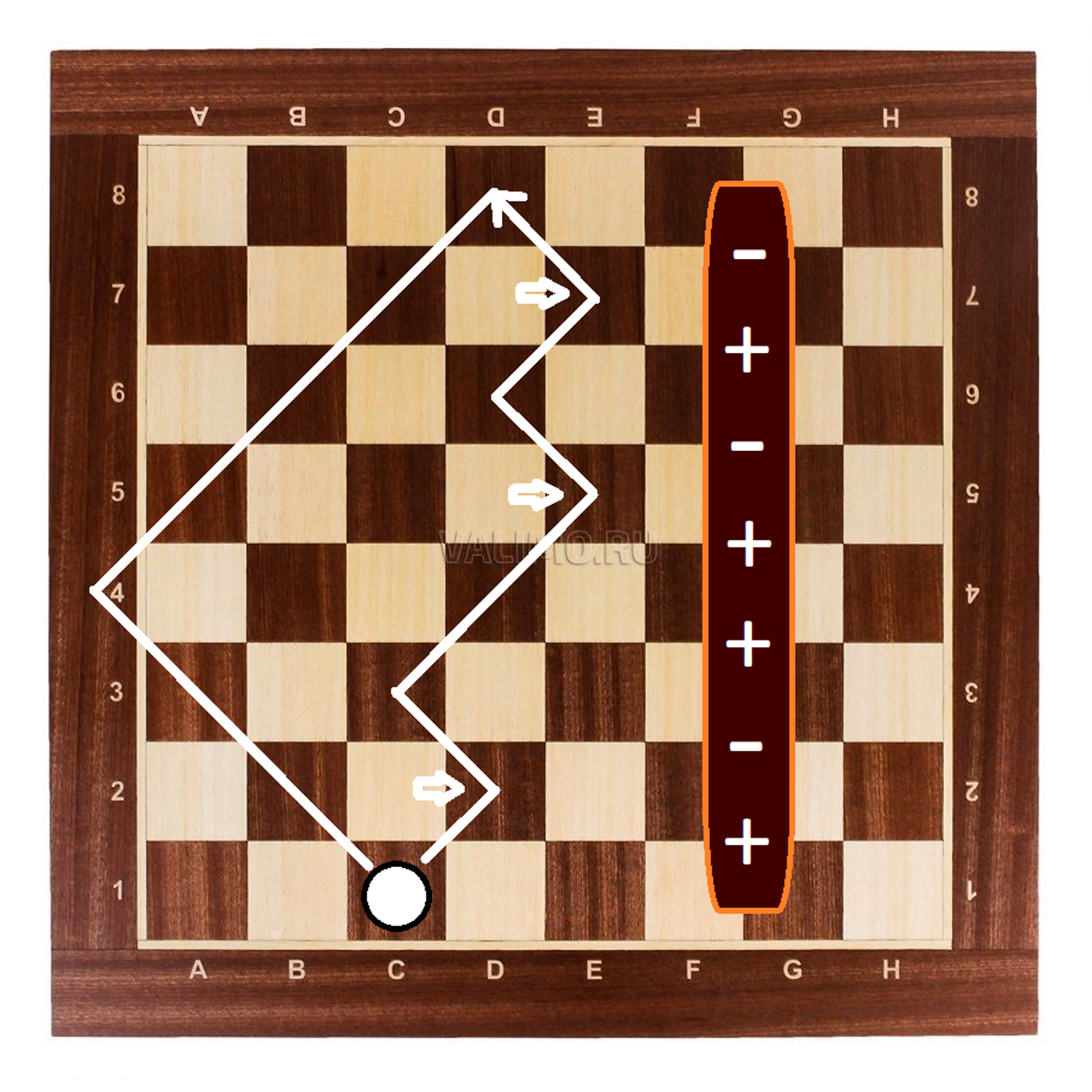}
    \caption{\cite{SU-22} A checker path from $(0, 0)$ to $(1, 7)$ with 5 turns and the corresponding Young diagram.}
    \label{checker-path-figure}
\end{wrapfigure}

Let us make the connection between Feynman checkers for $m = \varepsilon = 1$ and Young diagrams. As in Figure \ref{checker-path-figure}, to a path $s$ from $(0, 0)$ to $(x, t)$ through $(1, 1)$ with an odd number of turns assign the Young diagram obtained by drawing the lines from $(0, 0)$ and $(x, t)$ upwards-left and downwards-left respectively to their intersection point and rotating through $45^\circ$ counterclockwise. The value $\mathrm{turns}(s)$ modulo 4 affects both the parity of the number of steps in the resulting Young diagram and the sign of the corresponding term in \eqref{a-definition}. For instance, in Figure \ref{checker-path-figure}, $\mathrm{turns}(s)$ is 1 modulo 4, the number of steps is odd, and the sign of the corresponding term is ``+''. Thus the value $2^{(t-1)/2}\,\operatorname{Re} a(x, t)$ can be interpreted as the difference between the number of Young diagrams of size $\frac{t - x}{2} \times \frac{t + x}{2}$ with an odd and an even number of steps respectively.

\begin{example}[Boundary values]
\label{boundary-values-example}
\cite[Example 2]{SU-22}
For each $t \in \varepsilon \mathbb{Z}$, where $t > 0$, we have 
\begin{align*}
	a(t, t, m, \varepsilon) &= i(1 + m^2\varepsilon^2)^{(1 - t/\varepsilon)/2}, \\ 
	a(2\varepsilon - t, t, m, \varepsilon) &= m\varepsilon(1 + m^2\varepsilon^2)^{(1 - t/\varepsilon)/2}, \\
	a(t - 2\varepsilon, t, m, \varepsilon) &= (m\varepsilon + i(2 - t/\varepsilon)m^2\varepsilon^2)(1 + m^2\varepsilon^2)^{(1 - t/\varepsilon)/2},
\end{align*}
and for each $x > t$ or $x \le -t$ we have $a(x, t, m, \varepsilon) = 0$.
\end{example}

\begin{remark}
\label{eps-normalization-remark}
$a(x, t, m, \varepsilon) = a(x/\varepsilon, t/\varepsilon, m\varepsilon, 1)$.
\end{remark}

Now we state several known results to be used in our proofs below.

\begin{proposition}[Middle values]
\label{middle-values-proposition}
\cite[Proposition 4 and 18(B)]{SU-22}
For each $0 \le k < t$ the number $\tilde a_1(-t + 2k + 1, t)$ is the coefficient before $z^{t-k-1}$ in the expansion of the polynomial $2^{-t/2}\,(1 + z)^{t-k-1}\,(1 - z)^k$. In particular,
\begin{align*}
    \tilde a_1(0, 4n + 1) &= \frac{(-1)^n}{2^{(4n+1)/2}} \binom{2n}{n}, & \tilde a_1(0, 4n + 3) &= 0, \\
    \tilde a_1(2, 4n + 1) &= \frac{(-1)^n}{2^{(4n-1)/2}} \binom{2n - 1}{n}, & \tilde a_1(2, 4n + 3) &= \frac{(-1)^n}{2^{(4n+3)/2}} \left(\binom{2n}{n} - \binom{2n}{n + 1}\right).
\end{align*}
\end{proposition}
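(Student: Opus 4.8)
The plan is to record the number of turns of a checker path by a generating function and then specialize it. Throughout take $m=\varepsilon=1$ and write $\tilde a_1(-t+2k+1,t)=\operatorname{Re} a(-t+2k+1,t+1)$. A checker path from $(0,0)$ to $(-t+2k+1,t+1)$ containing $(1,1)$ is encoded by a word $w_1\cdots w_{t+1}$ over $\{R,L\}$ (up-right and up-left steps) with $w_1=R$, having $p:=k+1$ letters $R$ and $q:=t-k$ letters $L$; here $\mathrm{turns}(s)$ is the number of indices $i$ with $w_i\neq w_{i+1}$. By \eqref{a-definition} it then suffices to evaluate $\operatorname{Re}\bigl(i\sum_s(-i)^{\mathrm{turns}(s)}\bigr)$ and multiply by $2^{-t/2}$.

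First I would introduce $F(x,y,u):=\sum_w x^{\#R}y^{\#L}u^{\mathrm{turns}(w)}$, the sum over all nonempty words starting with $R$. Classifying words by their last letter and appending one letter gives, for the generating functions $G_R,G_L$ of words ending in $R$, resp.\ $L$, the linear system
\begin{align*}
    G_R = x + xG_R + xuG_L, \qquad G_L = yuG_R + yG_L,
\end{align*}
whose routine solution yields the closed form
\begin{align*}
    F(x,y,u) = \frac{x(1-y+uy)}{(1-x)(1-y)-xyu^2}.
\end{align*}
Substituting $u=-i$ (so $u^2=-1$) turns the denominator into the real series $1-x-y+2xy$, and a short computation of the numerator gives
\begin{align*}
    \operatorname{Re}\bigl(i\,F(x,y,-i)\bigr) = \frac{xy}{1-x-y+2xy}.
\end{align*}
Hence $\tilde a_1(-t+2k+1,t)=2^{-t/2}\,[x^p y^q]\,\tfrac{xy}{1-x-y+2xy}=2^{-t/2}\,[x^{p-1}y^{q-1}]\,\tfrac{1}{1-x-y+2xy}$, using $p,q\geq 1$ (from $0\le k<t$).

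It remains to identify this bivariate coefficient with $[z^{q-1}](1+z)^{q-1}(1-z)^{p-1}$. Summing the geometric series in $y$ first gives $[y^{q-1}]\tfrac{1}{1-x-y+2xy}=\tfrac{(1-2x)^{q-1}}{(1-x)^q}$, so the quantity equals $[x^{p-1}]\tfrac{(1-2x)^{q-1}}{(1-x)^q}$. The substitution $z=x/(1-x)$, for which $1+z=1/(1-x)$ and $1-z=(1-2x)/(1-x)$, transforms the residue $\operatorname{Res}_{x=0}x^{-p}(1-2x)^{q-1}(1-x)^{-q}\,dx$ into $\operatorname{Res}_{z=0}z^{-p}(1+z)^{p-1}(1-z)^{q-1}\,dz$, so the coefficient equals $[z^{p-1}](1+z)^{p-1}(1-z)^{q-1}$. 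Since both this and the target coefficient expand to the manifestly $p\leftrightarrow q$ symmetric sum $\sum_{b\ge 0}(-1)^b\binom{p-1}{b}\binom{q-1}{b}$, they agree, which (recalling $p=k+1$, $q=t-k$, so $q-1=t-k-1$ and $p-1=k$) proves the main claim. I expect this coefficient-identification — reconciling the bivariate rational generating function with the univariate polynomial — to be the main obstacle, since it requires either the change-of-variables/residue argument above or an explicit binomial identity; everything else is bookkeeping.

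Finally, the four displayed special cases follow by direct substitution together with $(1+z)^a(1-z)^a=(1-z^2)^a$. For $\tilde a_1(0,\cdot)$ one has $p=q$, so the polynomial is $(1-z^2)^{q-1}$ and one extracts the coefficient of $z^{q-1}$; this vanishes when $q-1$ is odd (the case $4n+3$) and equals $(-1)^{(q-1)/2}\binom{q-1}{(q-1)/2}$ when $q-1$ is even (the case $4n+1$). For $\tilde a_1(2,\cdot)$ one has $p=q+2$, so the polynomial is $(1-z^2)^{q-1}(1-z)^2=(1-z^2)^{q-1}(1-2z+z^2)$; collecting the two surviving contributions to the coefficient of $z^{q-1}$ and using $\binom{2n}{n-1}=\binom{2n}{n+1}$ gives the stated values.
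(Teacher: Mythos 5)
Your proof is correct, but there is nothing in the paper to compare it against line by line: the paper does not prove Proposition~\ref{middle-values-proposition} at all. It imports the statement wholesale from \cite[Propositions 4 and 18(B)]{SU-22}, remarking only that the two $\tilde a_1(2,\cdot)$ formulae are the $x=2$ case of Proposition 18(B) there. So your argument is a self-contained replacement for a citation, and it holds up under checking: the transfer-matrix system $G_R = x + xG_R + xuG_L$, $G_L = yuG_R + yG_L$ correctly encodes turns of words starting with $R$; extracting the real part coefficient-wise after setting $u=-i$ is legitimate precisely because the denominator $1-x-y+2xy$ becomes a real power series; the change of variables $z = x/(1-x)$ is a valid formal substitution for residues (it fixes $0$ with unit derivative), and the bookkeeping $x^{-p}(1-2x)^{q-1}(1-x)^{-q}\,dx = z^{-p}(1+z)^{p-1}(1-z)^{q-1}\,dz$ is exact; and the final reconciliation with the statement's polynomial via the $p\leftrightarrow q$-symmetric sum $\sum_{b\ge 0}(-1)^b\binom{p-1}{b}\binom{q-1}{b}$ is the right way to close the loop, since your derivation naturally produces $[z^{p-1}](1+z)^{p-1}(1-z)^{q-1}$ rather than $[z^{q-1}](1+z)^{q-1}(1-z)^{p-1}$. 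The four displayed special values also follow as you describe (a spot check: for $p=3$, $q=2$, i.e.\ $\tilde a_1(1,4)$, both direct path enumeration and your formula give $-1/4$). What the paper's citation buys is brevity and alignment with the conventions of \cite{SU-22}; what your derivation buys is transparency --- it shows exactly how the Jacobi-type polynomial $(1+z)^{t-k-1}(1-z)^k$ emerges from counting turns, directly from Definition~\ref{checkers-definition}, with no external input.
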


The latter two formulae are the particular case of \cite[Proposition 18(B)]{SU-22} for $x = 2$.

\begin{proposition}[Symmetry]
\label{a1-symmetry-proposition}
\cite[Proposition 8]{SU-22}
For each $(x, t) \in \varepsilon \mathbb{Z}^2$, where $t \ge 0$, we have
\begin{align*}
    \tilde a_1(x, t, m, \varepsilon) = \tilde a_1(-x, t, m, \varepsilon).
\end{align*}
\end{proposition}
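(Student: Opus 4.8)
The plan is to prove the stronger statement $\operatorname{Re} a(x, t, m, \varepsilon) = \operatorname{Re} a(-x, t, m, \varepsilon)$ for all $t > 0$; the proposition then follows by substituting $t + \varepsilon$ for $t$ in the definition of $\tilde a_1$. By Remark \ref{eps-normalization-remark} it suffices to treat $\varepsilon = 1$, and throughout I keep $m \ge 0$ arbitrary. Write $n = t$ and encode each checker path from $(0,0)$ to $(x,t)$ by its sequence of steps $w = (w_1, \dots, w_n) \in \{R, L\}^n$, where $R = (1,1)$ and $L = (-1,1)$; the endpoint satisfies $x = \#\{i : w_i = R\} - \#\{i : w_i = L\}$, and $\mathrm{turns}(w)$ equals the number of indices $i$ with $w_i \ne w_{i+1}$, i.e. the number of adjacent unequal pairs. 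The condition that the path contains $(1,1)$ is exactly $w_1 = R$. Since $a(x, t, m, 1) = (1 + m^2)^{(1-t)/2}\, i \sum_w (-im)^{\mathrm{turns}(w)}$ with real prefactor, I get $\operatorname{Re} a(x, t, m, 1) = -(1 + m^2)^{(1-t)/2} \operatorname{Im} S_R(x)$, where $S_R(x) := \sum_{w : w_1 = R} (-im)^{\mathrm{turns}(w)}$, the sum being over step sequences with the prescribed value $x$. Thus the whole claim reduces to showing that $\operatorname{Im} S_R(x)$ is an even function of $x$.

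The key elementary observation is a parity dichotomy: the number of adjacent unequal pairs of a string has the same parity as $[w_1 \ne w_n]$, since each such pair toggles the current letter. Hence a term $(-im)^{\mathrm{turns}(w)}$ is real when $w_1 = w_n$ (an even number of turns, contributing $(-1)^j m^{2j}$) and purely imaginary when $w_1 \ne w_n$ (an odd number of turns, contributing $-i(-1)^j m^{2j+1}$). Splitting $S_R(x)$ according to its last letter, only the strings of type $RL$ (first letter $R$, last letter $L$) contribute to $\operatorname{Im} S_R(x)$.

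It then remains to exhibit, for each fixed $x$ and fixed number of turns, a bijection between $RL$-strings of value $x$ and $RL$-strings of value $-x$. I would build it from two involutions of $\{R, L\}^n$, each preserving $\mathrm{turns}$: letter-complementation $R \leftrightarrow L$, which negates $x$ and sends $RL \mapsto LR$; and reversal $w \mapsto (w_n, \dots, w_1)$, which fixes $x$ and exchanges $RL \leftrightarrow LR$ while leaving $RR$ and $LL$ in place. Their composition carries $RL$-strings of value $x$ bijectively onto $RL$-strings of value $-x$, matching the number of turns and hence each summand exactly. Consequently $\operatorname{Im} S_R(x) = \operatorname{Im} S_R(-x)$, which is precisely what is needed.

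I expect the only genuinely delicate point to be the combined bookkeeping of the parity dichotomy and the first/last-letter types under the two involutions — in particular verifying that reversal preserves the count of adjacent unequal pairs and swaps exactly the $RL$ and $LR$ classes while fixing $RR$ and $LL$. Everything else (the reduction to $\varepsilon = 1$, the reality of the prefactor, and the base case $n = 1$, where $S_R$ is either empty or the single turn-free string $R$) is routine.
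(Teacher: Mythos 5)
Your proof is correct, but there is nothing in the paper to compare it against: the paper does not prove Proposition~\ref{a1-symmetry-proposition} at all, importing it by citation from \cite[Proposition 8]{SU-22}. What you actually establish is the natural stronger statement that $\operatorname{Re} a(x,t,m,\varepsilon)$ is an even function of $x$ for every $t>0$, from which the proposition follows via the shift $t\mapsto t+\varepsilon$ built into the definition of $\tilde a_1$; this stronger claim is consistent with Example~\ref{boundary-values-example}, where $\operatorname{Re} a(t-2\varepsilon,t,m,\varepsilon)=\operatorname{Re} a(2\varepsilon-t,t,m,\varepsilon)=m\varepsilon(1+m^2\varepsilon^2)^{(1-t/\varepsilon)/2}$. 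Both of your key steps check out: (i) since each turn toggles the current step direction, $\mathrm{turns}(w)$ is odd exactly when $w_1\neq w_n$, so after factoring out the real prefactor and the overall $i$ in \eqref{a-definition}, only the $RL$-strings (first step $R$, last step $L$) contribute to $\operatorname{Re} a$, the even-turn terms $(-1)^jm^{2j}$ being real and the odd-turn terms $-i(-1)^jm^{2j+1}$ purely imaginary; and (ii) letter-complementation negates the value $x$, preserves $\mathrm{turns}$, and swaps the classes $RL\leftrightarrow LR$, while word reversal preserves the value and $\mathrm{turns}$ and likewise swaps $RL\leftrightarrow LR$, so their composition is a turns-preserving bijection from $RL$-strings of value $x$ onto $RL$-strings of value $-x$, matching the purely imaginary summands one-to-one. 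The degenerate cases (empty sums when $x+t$ is odd or $x>t$ or $x\le -t$, and the case $m=0$) are automatically covered because the identification is term-by-term. In short, your argument is a legitimate, elementary, self-contained replacement for the external citation: the citation buys the paper brevity, whereas your bijection makes the combinatorial cause of the symmetry transparent --- a spatial reflection of checker paths composed with time reversal --- and yields the evenness of $\operatorname{Re} a$ itself rather than only of $\tilde a_1$.
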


\begin{proposition}[Equal-time recurrence relation]
\label{recurrence-proposition}
\cite[Proposition 10]{SU-22}, \cite[Proposition 15]{SU2-22}
For each $(x, t) \in \varepsilon \mathbb{Z}^2$, where $t > 0$, we have
\begin{align*}
	(x + \varepsilon)((x - \varepsilon)^2 - t^2)\,\tilde a_1(x - 2\varepsilon, t, m, \varepsilon) + (x - \varepsilon)((x + \varepsilon)^2 - t^2)\,\tilde a_1(x + 2\varepsilon, t, m, \varepsilon) = \\
	= 2x((1 + 2m^2\varepsilon^2)(x^2 - \varepsilon^2) - t^2)\,\tilde a_1(x, t, m, \varepsilon), \\
	(x + \varepsilon)((x - \varepsilon)^2 - (t + \varepsilon)^2)\,\tilde a_2(x - 2\varepsilon, t, m, \varepsilon) + (x - \varepsilon)((x + \varepsilon)^2 - (t - \varepsilon)^2)\,\tilde a_2(x + 2\varepsilon, t, m, \varepsilon) = \\
	= 2x((1 + 2m^2\varepsilon^2)(x^2 - \varepsilon^2) - t^2 + \varepsilon^2)\,\tilde a_2(x, t, m, \varepsilon).
\end{align*}
\end{proposition}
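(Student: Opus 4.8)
The plan is to reduce to $\varepsilon = 1$, replace $\tilde a_1$ and $\tilde a_2$ by explicit hypergeometric closed forms obtained directly from Definition~\ref{checkers-definition}, and then identify each of the two claimed identities as a three-term contiguous relation between Gauss hypergeometric functions whose parameters are shifted along the ``diagonal'' by integers. First I would invoke Remark~\ref{eps-normalization-remark} in the forms $\tilde a_1(x,t,m,\varepsilon) = \tilde a_1(x/\varepsilon, t/\varepsilon, m\varepsilon, 1)$ and $\tilde a_2(x,t,m,\varepsilon) = \tilde a_2(x/\varepsilon, t/\varepsilon, m\varepsilon, 1)$: since every coefficient in the two identities is homogeneous of degree $3$ in $(x, t, \varepsilon)$ while $m\varepsilon$ is scale-invariant, dividing through by $\varepsilon^3$ reduces the statement to the normalized variables $X = x/\varepsilon$, $T = t/\varepsilon$, $M = m\varepsilon$ with $\varepsilon = 1$.

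Next I would compute $\tilde a_1$ and $\tilde a_2$ from the path sum. Decomposing a checker path into maximal diagonal runs, a path from $(0,0)$ to $(x, t+1)$ through $(1,1)$ with exactly $2q-1$ turns corresponds to a splitting of its $\tfrac{t+1+x}{2}$ right-steps and its $\tfrac{t+1-x}{2}$ left-steps into $q$ nonempty blocks each; collecting the weight $i(-im)^{\mathrm{turns}}$ and separating by $\mathrm{turns} \bmod 4$ yields, for $t + x$ odd,
\[
\tilde a_1(x,t,m,1) = m\,(1+m^2)^{-t/2}\sum_{j\ge 0}\binom{A}{j}\binom{B}{j}(-m^2)^j, \qquad A = \tfrac{t+x-1}{2},\quad B = \tfrac{t-x-1}{2},
\]
that is $\tilde a_1 = m(1+m^2)^{-t/2}\,{}_2F_1(-A,-B;1;-m^2)$, together with a companion formula for $\tilde a_2$ carrying shifted parameters. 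When $t + x$ is even all three terms of the recurrence vanish by the parity constraint, so the identity is trivial there; for $m = 1$ the formula matches the generating-function description of Proposition~\ref{middle-values-proposition}, which I would use as a cross-check.

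The key step is the recurrence itself. Increasing $x$ by $2$ at fixed $t$ sends $(A, B) \mapsto (A+1, B-1)$ with $A + B = t - 1$ held constant, so Proposition~\ref{recurrence-proposition} is exactly a three-term relation among the hypergeometric values at $(A-1,B+1)$, $(A,B)$, $(A+1,B-1)$. I would obtain it by creative telescoping (Zeilberger's algorithm) applied to the summand $\binom{A}{j}\binom{B}{j}(-m^2)^j$ with $x$ as the stepping variable, producing a certified recurrence whose coefficients are polynomials in $x$ and in $m^2$; equivalently one can compose two of Gauss's contiguous relations and eliminate the intermediate term. The translation into the stated shape is made transparent by the factorizations $(x-1)^2 - t^2 = -4A(B+1)$ and $(x+1)^2 - t^2 = -4B(A+1)$, which convert the binomial prefactors produced by telescoping into the polynomials $(x\pm1)\bigl((x\mp1)^2 - t^2\bigr)$. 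The $\tilde a_2$ identity follows by the same route, the replacement of $t^2$ by $t^2 - \varepsilon^2$ and of $t$ by $t\pm\varepsilon$ in its coefficients reflecting the shifted parameters in the companion formula.

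The main obstacle I expect is the exact coefficient bookkeeping: verifying that the degree-three polynomial in $x$ produced by the telescoping --- including the $m$-dependence carried by the factor $1 + 2m^2$ on the leading term --- coincides identically with the one asserted, and dealing with the degenerate configurations in which a factor such as $x \pm \varepsilon$ or $(x\pm\varepsilon)^2 - t^2$ vanishes or the sum terminates early (for instance $x = 0,\ \pm\varepsilon,\ \pm t$). In each such case the generic argument may break down, and I would instead verify the identity by hand using the symmetry $\tilde a_1(x,t) = \tilde a_1(-x,t)$ of Proposition~\ref{a1-symmetry-proposition} (which also lets me assume $x \ge 0$) and the boundary values of Example~\ref{boundary-values-example}.
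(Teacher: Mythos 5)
The first thing to say is that the paper contains no proof of Proposition~\ref{recurrence-proposition} at all: it is imported verbatim as a known result, with citations to \cite[Proposition 10]{SU-22} and \cite[Proposition 15]{SU2-22}, and is used as a black box in the proofs of Theorems~\ref{oscillation-outside-theorem} and \ref{oscillation-middle-theorem}. So there is no internal argument to compare yours against; what you have produced is an independent reconstruction, and its route --- explicit binomial-sum formulas extracted from the path sum, plus a three-term contiguous relation of Gauss hypergeometric type --- is in the spirit of the cited sources rather than anything in this paper. On the merits, the parts you make explicit are correct. The reduction to $\varepsilon=1$ works: $\tilde a_1(x,t,m,\varepsilon)=\operatorname{Re}a(x,t+\varepsilon,m,\varepsilon)=\tilde a_1(x/\varepsilon,t/\varepsilon,m\varepsilon,1)$ by Remark~\ref{eps-normalization-remark} (likewise for $\tilde a_2$), and every coefficient is homogeneous of degree $3$ in $(x,t,\varepsilon)$ with $m$ entering only through $m\varepsilon$. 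Your closed form is right: a path to $(x,t+1)$ with $2q-1$ turns splits into $q$ right-runs and $q$ left-runs, giving $\binom{A}{q-1}\binom{B}{q-1}$ paths each contributing $(-1)^{q-1}m^{2q-1}$ to the real part, so $\tilde a_1(x,t,m,1)=m(1+m^2)^{-t/2}\sum_j\binom{A}{j}\binom{B}{j}(-m^2)^j$, and this does match Proposition~\ref{middle-values-proposition} at $m=1$. The factorizations $(x-1)^2-t^2=-4A(B+1)$ and $(x+1)^2-t^2=-4B(A+1)$ are also correct, and since $x\mapsto x\pm2$ is the anti-diagonal shift $(A,B)\mapsto(A\pm1,B\mp1)$, the claimed identity is indeed a three-term relation among contiguous ${}_2F_1$'s, whose existence is guaranteed and whose verification is algorithmic.

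That said, two gaps of execution remain, neither fatal. First, the actual content of the proposition --- that the contiguous relation has \emph{exactly} the stated degree-three coefficients, including the factor $1+2m^2\varepsilon^2$ --- is precisely the step you delegate to creative telescoping; as written this is a verified plan, not a proof. Note also that the clean way to close it is to run the WZ certification on the claimed combination itself (exhibit $G(x,j)$ with $\sum_k c_k(x)\,t(x+2k,j)=G(x,j+1)-G(x,j)$ and sum over $j$), rather than to compare the claimed recurrence with whatever recurrence Zeilberger outputs, which would additionally require an argument that the sum satisfies no first-order recurrence. Second, the symmetry of Proposition~\ref{a1-symmetry-proposition} reduces to $x\ge0$ only for the $\tilde a_1$ identity (under $x\mapsto-x$ that identity merely changes sign); the paper provides no analogous symmetry for $\tilde a_2$, so the degenerate configurations for $\tilde a_2$ --- where $x$ is near $0$ or $\pm t$ and the combinatorial binomial convention parts ways with the hypergeometric series --- must be checked directly on both sides from Example~\ref{boundary-values-example}, for negative as well as positive $x$. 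With those two items carried out, your argument would constitute a complete and correct proof.
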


\begin{proposition}[Large-time asymptotic formula between the peaks]
\label{asymptotic-a1-proposition}
\cite[Theorem 2]{SU-22}
For each $\delta > 0$ there is $C_\delta > 0$ such that for each $m, \varepsilon > 0$ and $(x, t) \in \varepsilon \mathbb{Z}^2$ with $(x + t) / \varepsilon$ odd satisfying
\begin{align*}
	|x|/t < 1/\sqrt{1 + m^2\varepsilon^2} - \delta, \qquad \varepsilon \le 1/m, \qquad t > C_\delta/m,
\end{align*}
we have
\begin{align}
	\tilde a_1(x, t, m, \varepsilon) &= \varepsilon\sqrt{\frac{2m}{\pi}}\,(t^2 - (1 + m^2\varepsilon^2)x^2)^{-1/4}\,\sin \theta(x, t, m, \varepsilon) + O_\delta\left(\frac{\varepsilon}{m^{1/2}t^{3/2}}\right),
	\intertext{where}
	\label{theta-definition}
	\theta(x, t, m, \varepsilon) &:= \frac{t}{\varepsilon}\left(\arcsin \frac{m\varepsilon}{\sqrt{(1 + m^2\varepsilon^2)(1 - (x/t)^2)}} - \frac{x}{t} \arcsin \frac{m\varepsilon x/t}{\sqrt{1 - (x/t)^2}}\right) + \frac{\pi}{4}.
\end{align}
\end{proposition}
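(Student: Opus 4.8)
The plan is to derive the formula by the method of stationary phase applied to a momentum-space (Fourier) representation of the wave function. First I would diagonalize the one-step evolution: after the normalization in \eqref{a-definition} the Feynman-checker dynamics is a unitary shift--coin operator, and the discrete Fourier transform over the Brillouin zone $p\in[-\pi,\pi]$ turns its $(t/\varepsilon)$-th power into an oscillatory integral of the shape
\[ \tilde a_1(x,t,m,\varepsilon)=\operatorname{Re}\int_{-\pi}^{\pi}\frac{dp}{2\pi}\,g(p,m,\varepsilon)\,\exp\!\Big(i\tfrac{t}{\varepsilon}\,\Phi(p)\Big),\qquad \Phi(p)=p\,\tfrac{x}{t}-\omega(p), \]
where the dispersion relation $\omega(p)$ satisfies $\cos\omega(p)=\cos p/\sqrt{1+m^2\varepsilon^2}$ and $g$ is an explicit smooth amplitude. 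Such a representation is classical for this model and already underlies the asymptotics of Ambainis et al. and Sunada--Tate cited above.

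Second, I would locate the stationary points. The equation $\Phi'(p)=0$ reads $\omega'(p)=x/t$, i.e. the group velocity equals the macroscopic velocity. In the regime $|x|/t<1/\sqrt{1+m^2\varepsilon^2}-\delta$ this has exactly two non-degenerate real solutions $\pm p_0$, symmetric about the origin and bounded away from the caustic value $|x|/t=1/\sqrt{1+m^2\varepsilon^2}$ at which the two stationary points coalesce and $\Phi''$ vanishes; that value is precisely the location of the main peak, where Airy rather than $\sin$ behaviour appears. The $\delta$-margin guarantees a lower bound on $|\Phi''(\pm p_0)|$ depending only on $\delta$, which is what makes the ordinary second-order stationary-phase expansion applicable.

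Third, the leading stationary-phase contribution at $\pm p_0$ is $g(p_0)/\sqrt{2\pi(t/\varepsilon)|\Phi''(p_0)|}$ carrying phase $(t/\varepsilon)\Phi(p_0)\pm\pi/4$, and the two complex-conjugate contributions combine into a single sine. I would then simplify explicitly: evaluating $\Phi(p_0)$ reproduces the two $\arcsin$ terms of $\theta$ in \eqref{theta-definition}, the $\pi/4$ is the signature phase shift of $\Phi''$, and computing $\Phi''(p_0)$ yields a curvature factor that assembles, together with the Jacobian and $g(p_0)$, into the amplitude $\varepsilon\sqrt{2m/\pi}\,(t^2-(1+m^2\varepsilon^2)x^2)^{-1/4}$. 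The exponent $-1/4$ is exactly the $-1/2$ power of $\sqrt{(t/\varepsilon)|\Phi''|}$ once the velocity dependence of $\Phi''(p_0)$ is expressed back through $x$ and $t$.

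The hard part will be the uniform control of the remainder, i.e. proving the error is genuinely $O_\delta(\varepsilon/(m^{1/2}t^{3/2}))$ with constants uniform in $m,\varepsilon$ subject only to $\varepsilon\le 1/m$ and $t>C_\delta/m$. This needs a quantitative stationary-phase estimate that tracks how the bound degrades as $|x|/t$ approaches the caustic (hence the $\delta$-dependence of $C_\delta$), together with the second-order term of the expansion to confirm that the correction is smaller than the leading $t^{-1/2}$ term by the expected factor $\sim 1/(mt)$, which is $<1/C_\delta$ under the hypothesis. I would handle this by repeated integration by parts away from $\pm p_0$ and a Taylor expansion of $\Phi$ and $g$ to second order on a shrinking neighbourhood, keeping all constants explicit in $\delta$; getting these estimates uniform down to the stated error order, rather than merely $o$ of the leading term, is the main technical burden. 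An alternative, more algebraic route would express $\tilde a_1$ through a Jacobi-polynomial value and invoke Darboux's asymptotic method, but controlling the remainder uniformly in the product $m\varepsilon$ seems no easier there.
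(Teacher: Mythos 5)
This proposition is not proved in the paper at all: it is imported as a citation from \cite[Theorem 2]{SU-22}, with the leading term going back to \cite[Theorem 2]{Ambainis-etal-01} and the remainder estimates to \cite[Proposition 2.2]{Sunada-Tate-12}, so there is no in-paper argument to compare against. Your plan --- momentum-space (Fourier) representation, dispersion relation $\cos(\omega\varepsilon)=\cos(p\varepsilon)/\sqrt{1+m^2\varepsilon^2}$, two non-degenerate stationary points at group velocity $x/t$ bounded away from the caustic by the $\delta$-margin, conjugate contributions combining into the sine with the $\pi/4$ signature shift --- is exactly the strategy used in those cited works, so the approach is the right one and the leading-order bookkeeping you describe does reproduce $\theta$ and the amplitude $(t^2-(1+m^2\varepsilon^2)x^2)^{-1/4}$. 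Be aware, though, that what you have is a proof plan rather than a proof: the sole content of the citation beyond Ambainis et al.\ is the uniform remainder bound $O_\delta\left(\varepsilon/(m^{1/2}t^{3/2})\right)$ with constants independent of $m$ and $\varepsilon$ under $\varepsilon\le 1/m$, $t>C_\delta/m$, and that is precisely the step you acknowledge but defer; until it is carried out (as in \cite{Sunada-Tate-12} or \cite{SU-22}) the proposal does not yet establish the stated error order.
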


Hereafter notation $f(x, t, m, \varepsilon) = O_\delta(g(x, t, m, \varepsilon))$ means that there is a constant $C(\delta)$ (depending on $\delta$ but \emph{not} on $x, t, m, \varepsilon$) such that for each $x, t, m, \varepsilon, \delta$ satisfying the assumptions of the theorem we have $|f(x, t, m, \varepsilon)| \le C(\delta)\left|g(x, t, m, \varepsilon)\right|$. From now on we omit the arguments $m, \varepsilon$ of the function $\theta(x, t, m, \varepsilon)$ in the same way as we do in Definition \ref{checkers-definition}.

\section{Behaviour near the angle sides}
\label{behaviour-outside-section}

The following theorem explains sign alternation in Figure \ref{a1-sign-alternation-figure} below the lines $t = \pm \sqrt{2}x$ and generalizes Theorem \ref{young-outside-theorem}.

\begin{theorem}
\label{oscillation-outside-theorem}
For each $k \in \{1, 2\}$, $m > 0$, and $(x, t) \in \varepsilon \mathbb{Z}^2$ such that $t > 0$ and $\frac{1}{\sqrt{1 + m^2\varepsilon^2}} \le \frac{x}{t} \le 1$ we have
\begin{align*}
	\mathrm{sgn}(\tilde a_k(x, t, m, \varepsilon)) &= (-1)^{\frac{t - x + k\varepsilon}{2\varepsilon} - 1}, \\
	|\tilde a_k(x - \varepsilon, t, m, \varepsilon)| &> |\tilde a_k(x + \varepsilon, t, m, \varepsilon)|,
\end{align*}
for even and odd $\frac{x + t}{\varepsilon} + k$, respectively.
\end{theorem}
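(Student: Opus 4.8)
The plan is to use the equal-time recurrence relation (Proposition \ref{recurrence-proposition}) as a two-term recurrence in $x$ at fixed $t$, reading it from the boundary $x = t$ inward, and to combine the sign statement with the monotonicity statement into a single inductive invariant. Since $\tilde a_1$ is even in $x$ by Proposition \ref{a1-symmetry-proposition}, and by rescaling via Remark \ref{eps-normalization-remark} I may as well work at $\varepsilon = 1$ with general mass $m$. In the regime $\frac{1}{\sqrt{1+m^2}} \le \frac{x}{t} \le 1$, the coefficients in the recurrence have a definite sign structure: writing the relation as
\begin{align*}
(x-1)((x+1)^2 - t^2)\,\tilde a_1(x+2,t) = 2x\bigl((1+2m^2)(x^2-1) - t^2\bigr)\,\tilde a_1(x,t) - (x+1)((x-1)^2 - t^2)\,\tilde a_1(x-2,t),
\end{align*}
the factors $(x\pm1)^2 - t^2$ are negative for $|x| < t$, and the middle factor $(1+2m^2)(x^2-1) - t^2$ is what controls the behaviour. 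The first step is to track the sign of this middle factor: the condition $\frac{x}{t} \ge \frac{1}{\sqrt{1+m^2}}$ is essentially $x^2(1+m^2) \ge t^2$, which forces the relevant combinations to have the signs needed to make the recurrence sign-reversing as $x$ decreases by $2$, giving the stated $(-1)^{(t-x+k)/2 - 1}$ alternation. I would pin down the base case from the boundary values in Example \ref{boundary-values-example}, where $a(t,t)$, $a(t-2,t)$, and $a(2-t,t)$ are given explicitly and have the predicted signs.

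The main work is to run the two statements together. Rather than proving the sign pattern and the monotonicity $|\tilde a_k(x-1,t)| > |\tilde a_k(x+1,t)|$ separately, I expect they must be coupled: the inequality on magnitudes is what guarantees that the sign alternation propagates without the wave function accidentally vanishing, and conversely the sign pattern is what lets one convert the recurrence (an identity about signed quantities) into a clean inequality on absolute values. Concretely, I would set up the induction on decreasing $x$ with the hypothesis that consecutive values have the predicted opposite signs \emph{and} strictly increasing magnitude; feeding this into the displayed recurrence, the three terms on the two sides then have known signs, so after dividing by the (signed) leading coefficient one obtains a lower bound on $|\tilde a_1(x-2,t)|$ in terms of $|\tilde a_1(x,t)|$ and $|\tilde a_1(x+2,t)|$. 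The case $k=2$ is handled identically using the second recurrence for $\tilde a_2$, whose coefficients differ only by the shifts $t \mapsto t\pm\varepsilon$ inside the factors, which shifts the relevant parity and accounts for the $k\varepsilon$ term in the exponent and the even/odd split on $\frac{x+t}{\varepsilon}+k$.

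The hard part will be verifying that the coefficient ratio coming out of the recurrence is genuinely large enough to sustain the strict magnitude inequality throughout the whole range $\frac{1}{\sqrt{1+m^2}} \le x/t \le 1$, rather than only near the boundary $x=t$. This amounts to an inequality among the polynomial coefficients $(x\pm1)((x\mp1)^2 - t^2)$ and $2x((1+2m^2)(x^2-1)-t^2)$; I would reduce it to showing that the "gain" factor $\left|\frac{(x+1)((x-1)^2 - t^2)}{(x-1)((x+1)^2 - t^2)}\right|$ together with the contribution of the middle term keeps the sequence of magnitudes increasing as $x$ decreases. A delicate subcase is the lower endpoint $x/t = 1/\sqrt{1+m^2}$, where the middle factor can vanish or nearly vanish and the recurrence degenerates; there I would check the boundary behaviour directly or by a limiting argument, and confirm that the weak inequality in the hypothesis $x/t \ge 1/\sqrt{1+m^2}$ is exactly what is needed for the estimate to remain valid (and matches the sharpness claim in the surrounding remarks). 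Throughout, attention to the parity of $\frac{x+t}{\varepsilon}+k$ is essential, since $\tilde a_k$ is supported only on the correct sublattice and the recurrence connects points two steps apart in $x$.
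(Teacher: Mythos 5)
Your proposal is correct and takes essentially the same route as the paper's proof: the paper defines $b_k(x) := (-1)^{\frac{t-x+k\varepsilon}{2\varepsilon}-1}\,\tilde a_k(x,t,m,\varepsilon)$ and proves the coupled invariant $b_k(x-2\varepsilon) > b_k(x) \ge 0$ by descending induction on $x$, with the base case from Example \ref{boundary-values-example}, the step from Proposition \ref{recurrence-proposition}, and the crux reduced to the coefficient inequality $t^2 < (1+m^2\varepsilon^2)(x^2-\varepsilon^2)$ (plus a small extra estimate for $k=2$) --- exactly your plan. The endpoint subtlety you flag is handled in the paper not by a limiting argument but by running the induction only for $x > \frac{t}{\sqrt{1+m^2\varepsilon^2}}+\varepsilon$, so the recurrence never degenerates, and deducing the theorem at a given point from the invariant applied one or two lattice steps further out.
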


\begin{proof}[Proof of Theorem \ref{oscillation-outside-theorem}]
Fix $t$, $m$, $\varepsilon$ and denote
\begin{align*}
	b_k(x) &:= (-1)^{\frac{t - x + k\varepsilon}{2\varepsilon} - 1}\,\tilde a_k(x, t, m, \varepsilon).
\end{align*}
It suffices to prove that
\begin{align*}
	b_k(x - 2\varepsilon) &> b_k(x) \ge 0 \quad\text{for}\quad \frac{t}{\sqrt{1 + m^2\varepsilon^2}} + \varepsilon < x \le t + \varepsilon.
\end{align*}
For $k = 1$, we prove it by induction on $x$ with step $2\varepsilon$ in the descending order. The base ($x = t + \varepsilon$) follows from the first and third equations in Example \ref{boundary-values-example}. The induction step is obtained from the following chain of relations:
\begin{align*}
	b_1(x - 2\varepsilon) = \frac{2x((1 + 2m^2\varepsilon^2)(x^2 - \varepsilon^2) - t^2)\,b_1(x) - (x - \varepsilon)(t^2 - (x + \varepsilon)^2)\,b_1(x + 2\varepsilon)}{(x + \varepsilon)(t^2 - (x - \varepsilon)^2)} \ge
\end{align*}
\begin{align*}
	\ge \frac{2x((1 + 2m^2\varepsilon^2)(x^2 - \varepsilon^2) - t^2) - (x - \varepsilon)(t^2 - (x + \varepsilon)^2)}{(x + \varepsilon)(t^2 - (x - \varepsilon)^2)}\,b_1(x) > b_1(x).
\end{align*}
Here the first equality follows from Proposition \ref{recurrence-proposition}. The next inequality holds by the inductive hypothesis and the non-negativity of $t^2 - (x + \varepsilon)^2$. Now let us check the last inequality. Since $t^2 - (x - \varepsilon)^2 > 0$, it is equivalent to
\begin{align*}
	2x((1 + 2m^2\varepsilon^2)(x^2 - \varepsilon^2) - t^2) - (x - \varepsilon)(t^2 - (x + \varepsilon)^2) > (x + \varepsilon)(t^2 - (x - \varepsilon)^2).
\end{align*}
After expansion and simplification using that $x > 0$ we get
\begin{align*}
	t^2 < (1 + m^2\varepsilon^2)(x^2 - \varepsilon^2).
\end{align*}
The resulting inequality is equivalent to $x^2 > \frac{t^2}{1 + m^2\varepsilon^2} + \varepsilon^2$, which can be obtained from the assumption $x \ge \frac{t}{\sqrt{1 + m^2\varepsilon^2}} + \varepsilon$ by squaring.
\\[10pt]
Let us prove $b_2(x - 2\varepsilon) > b_2(x)$ similarly. The induction base ($x = t$) follows from the first and third equations in Example \ref{boundary-values-example} by applying the following estimate:
\begin{align*}
	\frac{t}{\varepsilon} > \frac{\sqrt{1 + m^2\varepsilon^2}}{\sqrt{1 + m^2\varepsilon^2} - 1} = 1 + \frac{1}{\sqrt{1 + m^2\varepsilon^2} - 1} > 1 + \frac{1}{1 + m^2\varepsilon^2 - 1} = 1 + \frac{1}{m^2\varepsilon^2},
\end{align*}
because $x = t > \frac{t}{\sqrt{1 + m^2\varepsilon^2}} + \varepsilon$. The induction step is obtained from the following chain of relations:
\begin{align*}
	b_2(x - 2\varepsilon) = \frac{2x((1 + 2m^2\varepsilon^2)(x^2 - \varepsilon^2) - t^2 + \varepsilon^2)\,b_2(x) - (x - \varepsilon)((t - \varepsilon)^2 - (x + \varepsilon)^2)\,b_2(x + 2\varepsilon)}{(x + \varepsilon)((t + \varepsilon)^2 - (x - \varepsilon)^2)} \ge
\end{align*}
\begin{align*}
	\ge \frac{2x((1 + 2m^2\varepsilon^2)(x^2 - \varepsilon^2) - t^2 + \varepsilon^2) - (x - \varepsilon)((t - \varepsilon)^2 - (x + \varepsilon)^2)}{(x + \varepsilon)((t + \varepsilon)^2 - (x - \varepsilon)^2)}\,b_2(x) > b_2(x).
\end{align*}
Here the first equality follows from Proposition \ref{recurrence-proposition}. The next inequality holds by the inductive hypothesis and the non-negativity of $(t - \varepsilon)^2 - (x + \varepsilon)^2$. Now let us check the last inequality. Since $(t + \varepsilon)^2 - (x - \varepsilon)^2 > 0$, it is equivalent to
\begin{align*}
	2x((1 + 2m^2\varepsilon^2)(x^2 - \varepsilon^2) - t^2 + \varepsilon^2) - (x - \varepsilon)((t - \varepsilon)^2 - (x + \varepsilon)^2) > (x + \varepsilon)((t + \varepsilon)^2 - (x - \varepsilon)^2).
\end{align*}
After expansion and simplification using that $x > 0$ we get
\begin{align*}
	t^2 + \frac{\varepsilon^2t}{x} < (1 + m^2\varepsilon^2)(x^2 - \varepsilon^2).
\end{align*}
The resulting inequality is equivalent to $x^2 > \frac{t^2 + \varepsilon^2t/x}{1 + m^2\varepsilon^2} + \varepsilon^2$, which can be obtained from the assumption $x \ge \frac{t}{\sqrt{1 + m^2\varepsilon^2}} + \varepsilon$ by squaring using the following estimate:
\begin{align*}
    \frac{\varepsilon^2t/x}{1 + m^2\varepsilon^2} < \frac{\varepsilon^2\sqrt{1 + m^2\varepsilon^2}}{1 + m^2\varepsilon^2} = \frac{\varepsilon^2}{\sqrt{1 + m^2\varepsilon^2}} \le \frac{2\varepsilon t}{\sqrt{1 + m^2\varepsilon^2}}.
\end{align*}
\end{proof}

\begin{proof}[Proof of Theorem \ref{young-outside-theorem}]
It follows immediately from Theorem \ref{oscillation-outside-theorem} by substitution $m = \varepsilon = 1$, $x = h - w$, $t = h + w - 1$, because the assumptions of Theorem \ref{oscillation-outside-theorem} are obtained from the following estimate:
\begin{align*}
	\frac{x}{t} = \frac{h - w}{h + w - 1} > \frac{h - w}{h + w} > \frac{3 + 2\sqrt{2} - 1}{3 + 2\sqrt{2} + 1} = \frac{1}{\sqrt{2}}.
\end{align*}
\end{proof}

\section{Behaviour near the angle middle}
\label{behaviour-middle-section}

Let us restate Theorem \ref{young-middle-theorem} in terms of Feynman checkers.

\begin{theorem}
\label{oscillation-middle-theorem}
For each fixed integer $x \neq 0$, all sufficiently large integers $t \not\equiv x \bmod 2$ satisfy
\begin{align*}
    \mathrm{sgn}(\tilde a_1(x, t)) = (-1)^{\left\lfloor t/4 \right\rfloor}.
\end{align*}
\end{theorem}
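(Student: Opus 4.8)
The plan is to fix the nonzero integer $x$ and analyse $\tilde a_1(x,t)$ as $t\to\infty$ along $t\not\equiv x\bmod 2$. By the symmetry $\tilde a_1(x,t)=\tilde a_1(-x,t)$ (Proposition \ref{a1-symmetry-proposition}) and the fact that $(-1)^{\lfloor t/4\rfloor}$ does not change under $x\mapsto-x$, I may assume $x>0$. I would split the problem into two regimes according to $t\bmod 4$. The starting point is the phase of Proposition \ref{asymptotic-a1-proposition} with $m=\varepsilon=1$: Taylor-expanding at $u=x/t=0$, using $\arcsin\tfrac1{\sqrt2}=\tfrac\pi4$ and the fact that the quadratic coefficient of the phase function equals $-\tfrac12$, one obtains
\[
\theta(x,t)=\frac{\pi(t+1)}{4}-\frac{x^2}{2t}+O\!\left(\frac{x^4}{t^3}\right).
\]

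In the \emph{generic} regime $t\not\equiv 3\bmod 4$ the value $\sin\frac{\pi(t+1)}4\in\{\pm1,\pm\tfrac{\sqrt2}2\}$ is bounded away from $0$, and a direct case check over the six admissible residues of $t\bmod 8$ shows that its sign equals $(-1)^{\lfloor t/4\rfloor}$. Since $x^2/(2t)\to 0$, for large $t$ we have $\mathrm{sgn}\,\sin\theta(x,t)=\mathrm{sgn}\,\sin\frac{\pi(t+1)}4$ and $|\sin\theta(x,t)|\ge c>0$. The hypotheses of Proposition \ref{asymptotic-a1-proposition} hold for large $t$ (because $x/t\to 0<\tfrac1{\sqrt2}$, and $\tfrac{x+t}{\varepsilon}=x+t$ is odd), so the main term $\sqrt{2/\pi}\,(t^2-2x^2)^{-1/4}\sin\theta\sim c't^{-1/2}$ dominates the remainder $O(t^{-3/2})$ and fixes the sign to be $(-1)^{\lfloor t/4\rfloor}$. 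This regime already covers every $x$ odd (forcing $t$ even) and the case $x$ even with $t\equiv 1\bmod 4$.

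The remaining \emph{degenerate} case is $x$ even with $t=4n+3$, where $\frac{\pi(t+1)}4=\pi(n+1)$, so $\sin\frac{\pi(t+1)}4=0$ and $\sin\theta=(-1)^{n}\tfrac{x^2}{2t}(1+o(1))$; then $\tilde a_1(x,t)\sim(-1)^n\,\mathrm{const}\cdot t^{-3/2}$, which is the \emph{same order} as the remainder $O(t^{-3/2})$. Thus Proposition \ref{asymptotic-a1-proposition} alone cannot determine the sign, and this is the main obstacle. I would resolve it with exact data plus the recurrence. Proposition \ref{middle-values-proposition} gives $\tilde a_1(0,4n+3)=0$ and $\tilde a_1(2,4n+3)=\frac{(-1)^n}{2^{(4n+3)/2}}\big(\binom{2n}{n}-\binom{2n}{n+1}\big)$, whose sign is exactly $(-1)^n$ since $\binom{2n}{n}>\binom{2n}{n+1}$. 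Setting $g(x):=(-1)^n\tilde a_1(x,t)$, solving the equal-time recurrence (Proposition \ref{recurrence-proposition}, $m=\varepsilon=1$) for $\tilde a_1(x+2,t)$ and sending $t\to\infty$ with $x$ fixed, every coefficient converges to its limit with relative error $O(1/t^2)$, so the recurrence reduces to
\[
(x-1)\,g(x+2)-2x\,g(x)+(x+1)\,g(x-2)=0
\]
up to such errors. With $x=2k$ this limiting recurrence has solution space spanned by $G(k)=1$ and $G(k)=k^2$, and the initial data $g(0)=0$, $g(2)>0$ select $G(k)=k^2$.

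I would make this rigorous by induction on $k$: assuming $g(2i)=i^2g(2)\,(1+o(1))$ for all $i\le k$, the recurrence gives
\[
g(2k+2)=g(2)\,\frac{4k^3-(2k+1)(k-1)^2}{2k-1}+o(1)\cdot g(2)=(k+1)^2g(2)\,(1+o(1)),
\]
using the identity $4k^3-(2k+1)(k-1)^2=(k+1)^2(2k-1)$ and noting that the $O(1/t^2)$ coefficient errors, multiplied by the $\Theta(g(2))$ values $g(2i)$, contribute only $o(1)\cdot g(2)$. Since $g(2)>0$, for all large $t\equiv 3\bmod 4$ we conclude $g(2k)>0$, i.e. $\mathrm{sgn}\,\tilde a_1(x,t)=(-1)^n=(-1)^{\lfloor t/4\rfloor}$, completing the degenerate case and hence the theorem. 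The only delicate point is the uniformity of the $o(1)$ across the induction, which is immediate because $k=x/2$ is fixed while $t\to\infty$, so only finitely many steps occur.
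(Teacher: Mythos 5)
Your proposal is correct and follows essentially the same route as the paper: the case $t \not\equiv 3 \bmod 4$ is handled via Proposition \ref{asymptotic-a1-proposition} and the phase expansion $\theta(x,t) = \frac{\pi}{4}(t+1) + o(1)$, and the case $t \equiv 3 \bmod 4$ via the exact values $\tilde a_1(0, 4n+3) = 0$ and $\mathrm{sgn}\,\tilde a_1(2, 4n+3) = (-1)^n$ from Proposition \ref{middle-values-proposition}, combined with the equal-time recurrence (Proposition \ref{recurrence-proposition}) and induction on even $x$ as $t \to \infty$. The only cosmetic difference is that you track the explicit asymptotic profile $g(2k) = k^2 g(2)(1 + o(1))$, whereas the paper's Lemma \ref{neighbour-relation-lemma} only establishes that the neighbour ratios converge to limits exceeding $1$ — equivalent content, since your profile yields exactly those ratio limits $(x+2)^2/x^2$.
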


\begin{proof}[Proof of Theorem \ref{oscillation-middle-theorem}]
First we prove the theorem in the case $t \not\equiv 3 \bmod 4$. According to Proposition \ref{asymptotic-a1-proposition}, for fixed $x$ and sufficiently large $t \not\equiv x \bmod 2$ we have
\begin{align*}
    \tilde a_1(x, t) = \sqrt{\frac{2}{\pi}} \frac{\sin \theta(x, t)}{\sqrt[4]{t^2 - 2x^2}} + O\left(\frac{1}{t^{3/2}}\right),
\end{align*}
and
\begin{align*}
    \theta(x, t) = t \arcsin \frac{t}{\sqrt{2(t^2 - x^2)}} - x \arcsin \frac{x}{\sqrt{t^2 - x^2}} + \frac{\pi}{4} = \\
    = t \arcsin \left(\frac{1}{\sqrt{2}} + O\left(\frac{1}{t^2}\right)\right) - x \arcsin O\left(\frac{1}{t}\right) + \frac{\pi}{4} = \frac{\pi}{4}(t + 1) + o(1) \quad\text{as $t \to \infty$}.
\end{align*}
Thus, for any fixed $x$ and for all sufficiently large $t \not\equiv 3 \bmod 4$ such that $x + t$ is odd we have
\begin{align*}
    \mathrm{sgn}(\tilde a_1(x, t)) = \mathrm{sgn}\left(\sin\left(\frac{\pi}{4}(t + 1)\right)\right) = (-1)^{\left\lfloor t/4 \right\rfloor}. 
\end{align*}

Now suppose $t \equiv 3 \bmod 4$. The case $x = 2$ follows from the last equation in Proposition \ref{middle-values-proposition}. It suffices to prove the theorem in the case $x > 2$ by Proposition \ref{a1-symmetry-proposition}. Lemma \ref{neighbour-relation-lemma} below implies that eventually $\frac{\tilde a_1(x, t)}{\tilde a_1(x - 2, t)} > 1$, eventually $\frac{\tilde a_1(x - 2, t)}{\tilde a_1(x - 4, t)} > 1$, and so on. Multiplying the inequalities we get $\frac{\tilde a_1(x, t)}{\tilde a_1(2, t)} > 1$, in particular, $\mathrm{sgn}(\tilde a_1(x, t)) = \mathrm{sgn}(\tilde a_1(2, t)) = (-1)^{\left\lfloor t/4 \right\rfloor}$.
\end{proof}

\begin{lemma}
\label{neighbour-relation-lemma}
For each fixed even $x \ge 2$ there exists $\lim_{\substack{t \to \infty \\ t \equiv_4 3}} \frac{\tilde a_1(x + 2, t)}{\tilde a_1(x, t)} > 1$.
\end{lemma}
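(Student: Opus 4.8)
The plan is to prove the stronger, explicit statement that $\lim_{t\to\infty,\,t\equiv_4 3}\frac{\tilde a_1(x+2,t)}{\tilde a_1(x,t)} = \frac{(x+2)^2}{x^2}$ for every even $x \ge 2$, which is $>1$ and hence yields the lemma. I would prove this by induction on even $x$ using only the equal-time recurrence of Proposition \ref{recurrence-proposition} (with $m=\varepsilon=1$) together with the two boundary facts from Proposition \ref{middle-values-proposition}: that $\tilde a_1(0,4n+3)=0$ and that $\tilde a_1(2,4n+3)=\frac{(-1)^n}{2^{(4n+3)/2}}\bigl(\binom{2n}{n}-\binom{2n}{n+1}\bigr)\neq 0$. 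I would deliberately \emph{avoid} basing the argument on the asymptotic formula of Proposition \ref{asymptotic-a1-proposition}: as the computation in the proof of Theorem \ref{oscillation-middle-theorem} shows, $\theta(x,t)=\frac{\pi}{4}(t+1)-\frac{x^2}{2t}+O(t^{-2})$, so for $t\equiv 3\bmod 4$ the main term $\sin\theta$ sits within $O(t^{-1})$ of a zero and contributes at the same order $t^{-3/2}$ as the remainder; extracting the $t^{-3/2}$ coefficient therefore lies beyond the precision of that proposition.

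For the base case $x=2$, I would write out Proposition \ref{recurrence-proposition} at the central point $x=2$. Since $\tilde a_1(0,t)=0$ for $t\equiv 3\bmod 4$, the term carrying $\tilde a_1(0,t)$ drops, and the remaining identity reads $(9-t^2)\,\tilde a_1(4,t)=4(9-t^2)\,\tilde a_1(2,t)$, giving the \emph{exact} relation $\tilde a_1(4,t)=4\,\tilde a_1(2,t)$ for all $t\equiv 3\bmod 4$ with $t>3$. Combined with $\tilde a_1(2,t)\neq 0$ (the Catalan-type quantity above is positive), this establishes both that $\tilde a_1(4,t)\neq 0$ and that the ratio equals $4=\frac{4^2}{2^2}$ identically.

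For the inductive step I would assume, for some even $x\ge 4$, that $\tilde a_1(x,t)\neq 0$ for large $t\equiv_4 3$ and that $\tilde a_1(x,t)/\tilde a_1(x-2,t)\to x^2/(x-2)^2$; taking reciprocals gives $\tilde a_1(x-2,t)/\tilde a_1(x,t)\to (x-2)^2/x^2$. Solving the recurrence at center $x$ for the neighbouring ratio yields
\begin{align*}
\frac{\tilde a_1(x+2,t)}{\tilde a_1(x,t)} = \frac{2x\bigl(3(x^2-1)-t^2\bigr) - (x+1)\bigl((x-1)^2-t^2\bigr)\,\frac{\tilde a_1(x-2,t)}{\tilde a_1(x,t)}}{(x-1)\bigl((x+1)^2-t^2\bigr)}.
\end{align*}
Dividing numerator and denominator by $-t^2$, every polynomial-coefficient ratio tends to $1$, and substituting the inductive limit gives $\frac{2x-(x+1)(x-2)^2/x^2}{x-1}$, which simplifies (using $x^3+3x^2-4=(x-1)(x+2)^2$) to $\frac{(x+2)^2}{x^2}$. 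Since this limit is finite and positive, $\tilde a_1(x+2,t)\neq 0$ eventually, completing the step.

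The main point requiring care — the only real obstacle — is keeping the ratios well defined: one must ensure $\tilde a_1(x,t)\neq 0$ before dividing by it and before inverting the inductive ratio. This is why I fold non-vanishing into the induction hypothesis, anchoring it at $\tilde a_1(2,t)\neq 0$ and propagating it through each positive limit. Once this bookkeeping is in place, every limit on the right-hand side exists, so the left-hand ratio converges to the stated value, and the strict inequality $\frac{(x+2)^2}{x^2}=\bigl(1+\tfrac{2}{x}\bigr)^2>1$ gives the lemma.
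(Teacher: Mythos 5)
Your proof is correct and follows essentially the same route as the paper: induction on even $x$ via the equal-time recurrence of Proposition \ref{recurrence-proposition}, anchored at $x=2$ by the vanishing of $\tilde a_1(0,4n+3)$ and the non-vanishing of $\tilde a_1(2,4n+3)$ from Proposition \ref{middle-values-proposition}. The only difference is that you carry the explicit value $k(x)=(x+2)^2/x^2$ through the induction (which is indeed consistent with, and slightly sharper than, the paper's recursion $k(x)=\frac{2xk(x-2)-x-1}{(x-1)k(x-2)}$ and its bound $k(x)>1$), and you make explicit the non-vanishing bookkeeping that the paper leaves implicit.
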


\begin{proof}
We prove the lemma by induction on $x \ge 2$. Denote $k(x) := \lim_{\substack{t \to \infty \\ t \equiv_4 3}} \frac{\tilde a_1(x + 2, t)}{\tilde a_1(x, t)}$. The base ($x = 2$) follows from Proposition \ref{recurrence-proposition} for $m = \varepsilon = 1$:
\begin{align*}
    k(2) = \lim_{\substack{t \to \infty \\ t \equiv_4 3}} \frac{\tilde a_1(4, t)}{\tilde a_1(2, t)} = \lim_{\substack{t \to \infty \\ t \equiv_4 3}} \frac{2x(3(x^2 - 1) - t^2)\,\tilde a_1(x, t) - (x + 1)((x - 1)^2 - t^2)\,\tilde a_1(x - 2, t)}{(x - 1)((x + 1)^2 - t^2)\,\tilde a_1(x, t)} \bigg\rvert_{x = 2} = \\
    = \lim_{\substack{t \to \infty \\ t \equiv_4 3}} \frac{4(9 - t^2)\,\tilde a_1(2, t) - 3(1 - t^2)\,\tilde a_1(0, t)}{(9 - t^2)\,\tilde a_1(2, t)} = 4,
\end{align*}
because $\tilde a_1(0, 4n + 3) = 0$ and $\tilde a_1(2, 4n + 3) \neq 0$ by Proposition \ref{middle-values-proposition}. The induction step follows from
\begin{align*}
    k(x) = \lim_{\substack{t \to \infty \\ t \equiv_4 3}} \frac{\tilde a_1(x + 2, t)}{\tilde a_1(x, t)} = \lim_{\substack{t \to \infty \\ t \equiv_4 3}} \frac{2x(3(x^2 - 1) - t^2)\,\tilde a_1(x, t) - (x + 1)((x - 1)^2 - t^2)\,\tilde a_1(x - 2, t)}{(x - 1)((x + 1)^2 - t^2)\,\tilde a_1(x, t)} = \\
    = \lim_{u \to \infty} \frac{2x(u - 3(x^2 - 1))\,k(x - 2) - (x + 1)(u - (x - 1)^2)}{(x - 1)(u - (x + 1)^2)\,k(x - 2)} = \frac{2xk(x - 2) - x - 1}{(x - 1)\,k(x - 2)} > 1,
\end{align*}
because $k(x - 2) > 1$ by the induction hypothesis and $x > 2$.
\end{proof}

\begin{proof}[Proof of Theorem \ref{young-middle-theorem}]
It follows trivially from Theorem \ref{oscillation-middle-theorem} by substitution $x = d$, $t = 2w + d - 1$.
\end{proof}

\begin{corollary}
\label{a1-nonzero-corollary}
For each fixed integer $x \neq 0$, all sufficiently large integers $t \not\equiv x \bmod 2$ satisfy $\tilde a_1(x, t) \neq 0$.
\end{corollary}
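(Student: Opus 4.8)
The plan is to read this off directly from Theorem \ref{oscillation-middle-theorem}, which has just been proved. First I would recall that for each fixed integer $x \neq 0$ and all sufficiently large $t \not\equiv x \bmod 2$ that theorem gives
\[
    \mathrm{sgn}(\tilde a_1(x, t)) = (-1)^{\lfloor t/4 \rfloor}.
\]
The right-hand side is always $+1$ or $-1$, in particular nonzero, and $\mathrm{sgn}(y)$ vanishes exactly when $y = 0$; hence the displayed equality forces $\tilde a_1(x, t) \neq 0$ for all such $x$ and $t$. This is precisely the assertion of the corollary, so nothing further is needed.

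I expect no genuine obstacle here: the corollary is logically weaker than Theorem \ref{oscillation-middle-theorem}, isolating only the nonvanishing already implicit in pinning down the exact sign. For completeness one could instead argue directly, retracing the two cases from the proof of that theorem. For $t \not\equiv 3 \bmod 4$ the asymptotic formula of Proposition \ref{asymptotic-a1-proposition} shows that the leading term $\sqrt{2/\pi}\,(t^2 - 2x^2)^{-1/4}\sin\theta(x, t)$ stays bounded away from zero, since $\theta(x, t) = \tfrac{\pi}{4}(t + 1) + o(1)$ keeps $\sin\theta(x, t)$ away from $0$ along the admissible residues, and this term dominates the $O(t^{-3/2})$ remainder. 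For $t \equiv 3 \bmod 4$ one combines the ratio estimate of Lemma \ref{neighbour-relation-lemma} with the nonvanishing of $\tilde a_1(2, t)$ from Proposition \ref{middle-values-proposition}. Either way the conclusion is the same, and the one-line deduction from Theorem \ref{oscillation-middle-theorem} is clearly the cleanest route.
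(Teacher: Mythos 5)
Your proposal is correct and matches the paper exactly: the corollary is stated without a separate proof precisely because it is the immediate one-line consequence of Theorem \ref{oscillation-middle-theorem} that you describe, since $(-1)^{\lfloor t/4 \rfloor} \in \{+1, -1\}$ forces $\tilde a_1(x, t) \neq 0$ whenever the sign formula holds. The alternative direct argument you sketch is just a retracing of the paper's proof of that theorem and is not needed.
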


\section{Dips positions}
\label{dips-section}

Let us give a precise definition of dips, which describes the positions of the minimal oscillation of the function $\tilde a_1(x, t, m, \varepsilon)$ for fixed $t, m, \varepsilon$ (see Figure \ref{dips-a1-figure}). For simplicity, we first give the definition in the particular case $\varepsilon = 1$, and then use Remark \ref{eps-normalization-remark} to generalize it to an arbitrary $\varepsilon$.

\begin{definition}
Fix real $w$ and $d$ called \emph{dip width} and \emph{depth exponents} respectively. For a positive integer $T$, a point $v \in \left(-\frac{1}{\sqrt{1 + m^2}}; \frac{1}{\sqrt{1 + m^2}}\right)$ is called a \textit{dip of order $T$} of the function $\tilde a_1(x, t, m)$, if and only if for each integer sequence $x_t$ satisfying
\begin{align}
\label{dips-seq-estimate}
	x_t = vt + o(t^w) \qquad\text{as $t \to \infty$},
\end{align}
we have
\begin{align}
\label{dips-lim-condition}
	\tilde a_1(x_t, t, m) - \tilde a_1(x_t - 2T, t, m) = o(t^d) \qquad\text{as $t \to \infty$}.
\end{align}
The point $v \in \left(-\frac{1}{\sqrt{1 + m^2\varepsilon^2}}; \frac{1}{\sqrt{1 + m^2\varepsilon^2}}\right)$ is called a \emph{dip of order $T$} of the function $\tilde a_1(x, t, m, \varepsilon)$, if and only if it is a dip of order $T$ of the function $\tilde a_1(x, t, m\varepsilon)$.
\end{definition}

In other words, we highlight all $v$ such that for every fixed $t$ and all $x$ giving the same remainder modulo $2T\varepsilon$ the oscillation of the function $\tilde a_1(x, t, m, \varepsilon)$ near $x = vt$ is small enough. Informally, the points $([vt], \tilde a_1([vt], t))$ stand out on Figures \ref{layer-a1-figure}--\ref{dips-a1-figure}, because near the local maximum and minimum the oscillation is less than in the vicinity, that is why the graph is sparse there. Proposition \ref{asymptotic-a1-proposition} implies that the oscillation amplitude of the function $\tilde a_1(x, t, m, \varepsilon)$ is of order $1/\sqrt{t}$, thus we should take $d = -1/2$. Numerical experiments show that the dip width is of order $\sqrt{t}$, thus we take $w = 1/2$.

\begin{theorem}
\label{dips-a1-theorem}
Assume $w = 1/2$ and $d = -1/2$. If $m > 0$ and $\varepsilon \le 1/m$, then all dips of order $T$ of the function $\tilde a_1(x, t, m, \varepsilon)$ are exactly the points
\begin{align}
\label{dips-v-formula}
	v = \frac{\sin(\pi k/T)}{\sqrt{m^2\varepsilon^2 + \sin^2(\pi k/T)}} \quad\text{for $k \in \mathbb{Z} \cap \left(-\frac{T}{2}; \frac{T}{2}\right)$}.
\end{align}
\end{theorem}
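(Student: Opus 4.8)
The plan is to reduce to $\varepsilon = 1$ and then read off the dip condition from the phase of the asymptotic formula in Proposition \ref{asymptotic-a1-proposition}. By Remark \ref{eps-normalization-remark} and the definition, $v$ is a dip of order $T$ of $\tilde a_1(x,t,m,\varepsilon)$ if and only if it is a dip of order $T$ of $\tilde a_1(x,t,m\varepsilon)$; since formula \eqref{dips-v-formula} is obtained from its $\varepsilon=1$ version by replacing $m$ with $m\varepsilon$, and the hypothesis $\varepsilon \le 1/m$ becomes $m\varepsilon \le 1$, it suffices to treat $\varepsilon = 1$ with $m \le 1$. Write $\theta(x,t) = t f(x/t) + \tfrac\pi4$, where $f(u) := \arcsin\frac{m}{\sqrt{(1+m^2)(1-u^2)}} - u\arcsin\frac{mu}{\sqrt{1-u^2}}$, and let $A(x,t) := \sqrt{2m/\pi}\,(t^2-(1+m^2)x^2)^{-1/4}$ denote the amplitude factor.

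First I would compute the phase velocity. A direct differentiation shows that the algebraic (non-$\arcsin$) parts cancel, leaving
\[ f'(u) = -\arcsin\frac{mu}{\sqrt{1-u^2}}, \qquad\text{so}\qquad \partial_x\theta(x,t) = f'(x/t). \]
For the candidate point $v = v_k$ in \eqref{dips-v-formula} one checks $\frac{m v_k}{\sqrt{1-v_k^2}} = \sin(\pi k/T)$ and $v_k \in \bigl(-1/\sqrt{1+m^2},\,1/\sqrt{1+m^2}\bigr)$, whence $f'(v_k) = -\pi k/T$ because $|\pi k/T| < \tfrac\pi2$. The identity $Tf'(v_k) \in \pi\mathbb{Z}$ is the crux: I claim the dips are exactly the $v$ in the admissible interval with $Tf'(v) \in \pi\mathbb{Z}$, which unwinds to \eqref{dips-v-formula}.

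For the forward direction, fix $v = v_k$ and any integer sequence $x_t = vt + o(t^{1/2})$; only $x_t + t$ odd matters, as otherwise both values of $\tilde a_1$ vanish. Applying Proposition \ref{asymptotic-a1-proposition} I would note that $A(x_t,t)$ is of order $t^{-1/2}$ while $A(x_t,t) - A(x_t-2T,t) = O(t^{-3/2})$, since $\partial_x A = O(t^{-3/2})$ on the relevant range (the radicand stays of order $t^2$ because $v$ is bounded away from $\pm 1/\sqrt{1+m^2}$). Hence by the product-to-sum identity, $\tilde a_1(x_t,t) - \tilde a_1(x_t-2T,t) = 2A(x_t,t)\cos\bar\theta\,\sin\!\bigl(\tfrac12[\theta(x_t,t)-\theta(x_t-2T,t)]\bigr) + O(t^{-3/2})$, where $\bar\theta := \tfrac12[\theta(x_t,t)+\theta(x_t-2T,t)]$. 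By the mean value theorem the half-difference of phases equals $Tf'(\xi/t)$ with $\xi/t \to v$, so it tends to $Tf'(v_k) = -\pi k \in \pi\mathbb{Z}$ with error $o(t^{-1/2})$; thus its sine is $o(t^{-1/2})$ and the whole difference is $o(t^{-1/2}) = o(t^{d})$, proving that $v_k$ is a dip.

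The converse is the main obstacle. Suppose $v$ is admissible but $Tf'(v)\notin\pi\mathbb{Z}$, so $\sin(Tf'(v)) \ne 0$ and, since $f'(v)\in(-\tfrac\pi2,\tfrac\pi2)$, also $f'(v) \ne 0$. I must exhibit one sequence $x_t = vt + o(t^{1/2})$ with $x_t + t$ odd along which the difference fails to be $o(t^{-1/2})$. The same identity gives a main term $2A(x_t,t)\cos\bar\theta\,\sin(Tf'(v)+o(1))$ of order $t^{-1/2}$ in which the sine is bounded away from $0$, so it suffices to keep $|\cos\bar\theta|$ bounded below. A Taylor expansion yields $\bar\theta = \Psi(t) + 2\ell f'(v) + o(1)$, where $x_t = x_t^{\circ} + 2\ell$ runs over integers of the correct parity near $vt$ and $\Psi(t)$ is independent of $\ell$. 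The delicate, elementary number-theoretic step is to choose a \emph{bounded} $\ell$ keeping $\bar\theta$ at distance $\ge \tfrac\pi4$ from $\tfrac\pi2 \pmod \pi$: if $f'(v)/\pi$ is irrational, $\{2\ell f'(v) \bmod \pi\}$ is dense, so a fixed finite range of $\ell$ is $\tfrac\pi6$-dense; if $f'(v)/\pi = p/q$ in lowest terms, then $Tf'(v)\notin\pi\mathbb{Z}$ forces $q \nmid T$ and $f'(v)\ne 0$ forces $q \ge 3$, so the achievable residues are equally spaced with spacing $\le \tfrac\pi2$ and at least one lies at distance $\ge \tfrac\pi4$ from $\tfrac\pi2$. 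Either way $|\cos\bar\theta| \ge 1/\sqrt2$ for large $t$ with $\ell = O(1) = o(t^{1/2})$, so $|\tilde a_1(x_t,t) - \tilde a_1(x_t-2T,t)| \ge c\,t^{-1/2}$, contradicting the dip condition. This confines the dips to $Tf'(v)\in\pi\mathbb{Z}$, which is exactly \eqref{dips-v-formula}.
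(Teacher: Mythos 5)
Your proposal is correct, and its overall architecture matches the paper's: reduction to $\varepsilon=1$ via Remark \ref{eps-normalization-remark}, the asymptotic formula of Proposition \ref{asymptotic-a1-proposition}, the product-to-sum identity, and the linearization of the phase with slope $f'(v)=-\arcsin\frac{mv}{\sqrt{1-v^2}}$ (your half-difference and half-sum of phases are exactly $\tfrac12\delta_t^-$ and $\tfrac12\delta_t^+$ in the paper's notation, and your forward direction is essentially the paper's Lemma \ref{dips-delta-minus-lemma} combined with Lemma \ref{dips-gamma-lemma}). The genuine divergence is in the converse, which you rightly identify as the main obstacle. The paper's Lemma \ref{dips-delta-plus-lemma} is a soft contradiction argument: if $\delta_t^+\xrightarrow{2\pi}c$ held for \emph{every} admissible sequence, then comparing two sequences with $z_t-y_t=2$ would force $4\arcsin\frac{mv}{\sqrt{1-v^2}}\in 2\pi\mathbb{Z}$, hence $v=0$, and the case $v=0$ is killed by the non-convergence of $2t\arcsin\frac{m}{\sqrt{1+m^2}}$ modulo $2\pi$. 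You instead \emph{construct} one bad sequence: shift a reference sequence by a bounded even amount $2\ell(t)$ chosen so that $\bar\theta$ lands in a fixed closed arc where $|\cos\bar\theta|$ is bounded below, splitting into the cases $f'(v)/\pi$ irrational (Kronecker density of $\{2\ell f'(v)\bmod\pi\}$) and rational $p/q$ (achievable residues equally spaced with gap at most $\pi/2$, using $q\ge 3$); both cases check out, including the parity bookkeeping, which you handle more explicitly than the paper does. What each approach buys: yours is more quantitative (it produces an explicit sequence and an explicit lower bound $c\,t^{-1/2}$ for the oscillation) and it never needs the special case $v=0$, since $v=0$ always satisfies \eqref{dips-v-formula}; the paper's two-sequence trick avoids your rational/irrational case analysis entirely, and its Lemma \ref{dips-delta-plus-lemma}, being stated for an arbitrary target $c$, is reused with $c=0$ in the proof of Proposition \ref{lowerbound-sharpness-proposition}, whereas your argument is tailored to keeping $|\cos\bar\theta|$ large. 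Two trivial polish items: the bound $|\cos\bar\theta|\ge 1/\sqrt2$ should carry a $-o(1)$ (or aim for distance $\ge\pi/3$ from $\pi/2$) because of the $o(1)$ error in the expansion of $\bar\theta$, and the selection $\ell(t)$ should be noted to come from a range independent of $t$ so that $x_t=vt+O(1)$ indeed satisfies \eqref{dips-seq-estimate}; neither affects correctness.
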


\begin{remark}
Formula \eqref{dips-v-formula} has a curious physical interpretation. Consider $\omega_p := \frac{1}{\varepsilon} \arccos{\frac{\cos{p\varepsilon}}{\sqrt{1 + m^2\varepsilon^2}}}$, which is equal to the energy of an electron with momentum $p$ in Feynman checkers (see \cite[after Proposition 12]{SU-22}). The right side of \eqref{dips-v-formula} is obviously $\frac{\partial \omega_p}{\partial p} \rvert_{p = \frac{\pi k}{T\varepsilon}}$, which corresponds to the velocity of an electron with momentum $p$ due to the Hamilton--Jacobi equation $\frac{\partial \omega_p}{\partial p} = \frac{dx}{dt}$. Since $\hbar = 1$ in our units, the de Broglie wavelength is $2\pi/p$. Thus the dips occur for those electron velocities which correspond to de Broglie wavelengths being rational multiplies of $\varepsilon$. In other words, the dips are explained by electron diffraction on the lattice $\varepsilon \mathbb{Z}^2$.
\end{remark}

In addition to the physical meaning, this result allows us to prove the sharpness of the lower bound in Theorem \ref{oscillation-outside-theorem}.

\begin{proposition}
\label{lowerbound-sharpness-proposition}
If $m > 0$ and $\varepsilon \le 1/m$, then for every real $v_0 < \frac{1}{\sqrt{1 + m^2\varepsilon^2}}$ there exists $(x, t) \in \varepsilon \mathbb{Z}^2$ such that
\begin{align*}
	t > 0, \qquad \frac{x + t}{\varepsilon} \mathrel{\not\vdots} 2, \qquad v_0 \le \frac{x}{t} < \frac{1}{\sqrt{1 + m^2\varepsilon^2}}, \qquad \mathrm{sgn}(\tilde a_1(x, t, m, \varepsilon)) \neq (-1)^{\frac{t - x - \varepsilon}{2\varepsilon}}.
\end{align*}
\end{proposition}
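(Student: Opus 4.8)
The plan is to exploit the dips of \emph{odd} order that accumulate at the threshold velocity $\tfrac{1}{\sqrt{1+m^2\varepsilon^2}}$, and to play them off against the large-time asymptotics to contradict the alternating sign pattern. First I would reduce to $\varepsilon = 1$: by Remark \ref{eps-normalization-remark} and the definition of a dip for general $\varepsilon$, it suffices to treat $\tilde a_1(x, t, m')$ with $m' := m\varepsilon \le 1$, threshold $\tfrac{1}{\sqrt{1+m'^2}}$, and sign exponent $\tfrac{t-x-\varepsilon}{2\varepsilon} = \tfrac{t-x-1}{2}$; I henceforth write $m$ for $m'$ and take $\varepsilon = 1$. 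Given $v_0 < \tfrac{1}{\sqrt{1+m^2}}$, I invoke Theorem \ref{dips-a1-theorem}: the dips of order $T$ are the points $v_k = \tfrac{\sin(\pi k/T)}{\sqrt{m^2+\sin^2(\pi k/T)}}$. Taking $T = 2j+1$ odd and $k = j$ gives $v_j \to \tfrac{1}{\sqrt{1+m^2}}$ as $j\to\infty$, so I may fix one odd $T$ together with a positive dip $v \in \bigl(v_0, \tfrac{1}{\sqrt{1+m^2}}\bigr)$. The decisive feature of odd $T$ is that shifting $x$ by $2T$ changes the parity exponent by $T$, i.e.\ \emph{reverses} the sign prescribed by the pattern $(-1)^{(t-x-1)/2}$.

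Next, for each large $t$ with $x+t$ odd I would choose $x_t = vt + o(t^{1/2})$, of the correct parity, with $|\tilde a_1(x_t, t, m)| \ge c\,t^{-1/2}$ for a constant $c>0$. Here Proposition \ref{asymptotic-a1-proposition} enters: fixing $\delta$ with $0<\delta<\tfrac{1}{\sqrt{1+m^2}}-v$, the asymptotic holds uniformly near $vt$, the amplitude $(t^2-(1+m^2)x^2)^{-1/4}$ is $\asymp t^{-1/2}$, and the remainder is $O_\delta(t^{-3/2})$. The phase satisfies $\partial\theta/\partial x = p^\ast = \pi k/T$ (the local momentum, cf.\ the Remark after Theorem \ref{dips-a1-theorem}), which is bounded away from $0$; hence $\theta(\cdot,t)$ sweeps a range tending to infinity even across a window of width, say, $t^{1/4} = o(t^{1/2})$. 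I can therefore pick an admissible $x_t$ in this window with $|\sin\theta(x_t,t)| \ge \tfrac12$, and since the remainder is of smaller order than the amplitude, this yields the lower bound $|\tilde a_1(x_t,t,m)| \ge c\,t^{-1/2}$.

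Finally I combine the two inputs. The dip condition \eqref{dips-lim-condition} with $w=\tfrac12$, $d=-\tfrac12$ gives $\tilde a_1(x_t,t,m) - \tilde a_1(x_t - 2T, t, m) = o(t^{-1/2})$; together with $|\tilde a_1(x_t,t,m)| \ge c\,t^{-1/2}$ this forces, for all large $t$, the two values $\tilde a_1(x_t,t,m)$ and $\tilde a_1(x_t - 2T, t, m)$ to have the \emph{same} sign. Were the pattern $\mathrm{sgn}\,\tilde a_1(x,t,m) = (-1)^{(t-x-1)/2}$ to hold at both points, then, $T$ being odd, they would have \emph{opposite} signs, a contradiction. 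Hence the pattern fails at $x_t$ or at $x_t - 2T$. Either point has $t>0$, $x+t$ odd, and ratio $x/t \to v \in \bigl(v_0, \tfrac{1}{\sqrt{1+m^2}}\bigr)$, so for large $t$ it satisfies $v_0 \le x/t < \tfrac{1}{\sqrt{1+m^2}}$; translating back through Remark \ref{eps-normalization-remark} gives the required $(x,t)\in\varepsilon\mathbb Z^2$ and proves the proposition, thereby establishing sharpness of the bound in Theorem \ref{oscillation-outside-theorem}.

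I expect the main obstacle to be the lower bound of the second paragraph: producing a genuine choice $x_t = vt + o(t^{1/2})$ of the correct parity for which $|\tilde a_1(x_t,t,m)|\gtrsim t^{-1/2}$. One must check that $\theta$ truly varies by more than $2\pi$ across an admissible window (so that $|\sin\theta|\ge\tfrac12$ is attainable) and that the $O_\delta(t^{-3/2})$ remainder is negligible there uniformly in $t$; it is precisely the accumulation of \emph{odd}-order dips at the threshold, secured in the first paragraph, that makes such a window available arbitrarily close to $\tfrac{1}{\sqrt{1+m^2}}$.
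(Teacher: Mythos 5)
Your overall architecture is the same as the paper's: reduce to $\varepsilon=1$ via Remark \ref{eps-normalization-remark}; pick an \emph{odd}-order dip $v$ with $v_0 < v < \tfrac{1}{\sqrt{1+m^2}}$ (the paper invokes density of odd-order dips in the whole interval, you use their accumulation at the threshold via $T=2j+1$, $k=j$ --- both work); produce a sequence $x_t = vt + o(t^{1/2})$ along which $\tilde a_1$ is \emph{not} $o(t^{-1/2})$; combine with the dip property \eqref{dips-lim-condition} to force $\tilde a_1(x_t,t,m)$ and $\tilde a_1(x_t-2T,t,m)$ to share a sign; and use oddness of $T$ to conclude the alternating pattern fails at one of the two points. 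The one place you genuinely diverge is the non-vanishing step. The paper gets it non-constructively: Lemma \ref{dips-delta-plus-lemma} with $c=0$ yields a sequence with $\delta_t^+ \not\xrightarrow{2\pi} 0$, Lemma \ref{dips-delta-minus-lemma} gives $\delta_t^- \xrightarrow{2\pi} 0$, and adding gives $2\theta(x_t,t,m) \not\xrightarrow{2\pi} 0$, hence $\sin\theta(x_t,t,m)\not\to 0$ along a subsequence (which suffices). You instead try to \emph{construct} $x_t$ with $|\sin\theta(x_t,t,m)|\ge \tfrac12$ by sampling the phase; this is more transparent and would give the sign coincidence for all large $t$, but it is exactly where your write-up has a gap.

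The gap: from ``$\partial\theta/\partial x \approx \pi k/T$ is bounded away from $0$, so $\theta$ sweeps a range tending to infinity over a window of width $t^{1/4}$'' you cannot conclude that some \emph{admissible} point (integer, correct parity) in the window has $|\sin\theta|\ge\tfrac12$. Admissible points are spaced $2$ apart, so what matters is the phase increment per admissible step \emph{modulo} $2\pi$, namely $2\pi k/T \bmod 2\pi$: if this increment were $\equiv 0$, the continuous phase would sweep an arbitrarily long range while every sample sat at the same point of the circle (aliasing). For the same reason, the check you propose in your last paragraph --- ``$\theta$ varies by more than $2\pi$ across the window'' --- is the wrong criterion; it can hold while all samples have identical $\sin$. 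The repair is available from ingredients you already cite: since $0<k<T/2$, the increment $2\pi k/T$ lies in $(0,\pi)$, so it is not a multiple of $2\pi$; by the expansion \eqref{dips-theta-expansion} (whose quadratic error is $o(1)$ uniformly over your window, since $t\,(t^{1/4}/t)^2 = t^{-1/2}$), the phases at $T$ consecutive admissible points are, up to $o(1)$, the $T/\gcd(k,T)\ge 3$ equally spaced points of a coset of a cyclic subgroup of the circle, and among at least $3$ such points one lies at distance at least $\pi/3$ from $\{0,\pi\}$, i.e.\ has $|\sin|\ge \sqrt3/2$. With that patch (and a fixed $\delta$ so that the $O_\delta(t^{-3/2})$ remainder in Proposition \ref{asymptotic-a1-proposition} is dominated by the $\asymp t^{-1/2}$ amplitude), your argument goes through; the remaining steps (parity flip for odd $T$, $x/t\to v \in [v_0, \tfrac{1}{\sqrt{1+m^2}})$, rescaling back to $\varepsilon\mathbb{Z}^2$) are correct as written.
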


Let us introduce some notation to be used in the proofs below. For each positive $p \in \mathbb{R}$ and each $a, b \in \mathbb{R} / p \mathbb{Z}$ denote by $\rho_p(a, b)$ the distance from $b - a$ to the closest point of $p \mathbb{Z}$. For a sequence of real numbers $x_n$ and $c \in \mathbb{R}$, the notation ``$x_n \xrightarrow{p} c$'' means the convergence of $x_n$ to $c$ in the metric $\rho_p$, in other words, the convergence $x_n \bmod p \mathbb{Z} \to c \bmod p \mathbb{Z}$ in $\mathbb{R} / p \mathbb{Z}$. Set $\varepsilon = 1$ without loss of generality; see Remark \ref{eps-normalization-remark}. Fix arbitrary $m > 0$, $T \ge 1$, $v \in \left(-\frac{1}{\sqrt{1 + m^2}}; \frac{1}{\sqrt{1 + m^2}}\right)$, a sequence $x_t$ fulfilling \eqref{dips-seq-estimate}, and denote
\begin{align*}
	x'_t &:= x_t - 2T, \\
	\delta_t^- = \delta_t^-(x_t) &:= \theta(x_t, t, m) - \theta(x'_t, t, m), \\
	\delta_t^+ = \delta_t^+(x_t) &:= \theta(x_t, t, m) + \theta(x'_t, t, m), \\
	\gamma_t = \gamma_t(x_t) &:= \min \left\{\rho_{2\pi}(0, \delta_t^-), \rho_{2\pi}(\pi, \delta_t^+)\right\}.
\end{align*}
Here we use notation \eqref{theta-definition}. Note that there is $t_0$ such that for every $t > t_0$ the values $\theta(x_t, t, m)$ and $\theta(x'_t, t, m)$ are well-defined, because $\theta(x, t, m)$ is well-defined for all $x, t, m$ satisfying $|x/t| < 1/\sqrt{1 + m^2}$. Hereafter we assume that $t > t_0$ and omit the phrase ``as $t \to \infty$''. Take $\delta_v > 0$ such that $|v| < 1/\sqrt{1 + m^2} - \delta_v$.

\begin{lemma}
\label{dips-gamma-lemma}
Assume $w = 1/2$, $d = -1/2$, $0 < m \le 1$ and \eqref{dips-seq-estimate}. Then condition \eqref{dips-lim-condition} is equivalent to the condition $\gamma_t \to 0$.
\end{lemma}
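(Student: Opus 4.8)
The plan is to feed the large-time asymptotics of Proposition~\ref{asymptotic-a1-proposition} into the difference $\tilde a_1(x_t,t,m)-\tilde a_1(x_t',t,m)$ and reduce condition \eqref{dips-lim-condition} to a purely trigonometric statement about $\delta_t^\pm$, which then unwinds into $\gamma_t\to 0$. First I would verify that the hypotheses of Proposition~\ref{asymptotic-a1-proposition} hold for all large $t$: the bound $|x_t|/t<1/\sqrt{1+m^2}-\delta_v$ follows from $x_t/t\to v$ (and likewise from $x_t'/t\to v$, since $x_t-x_t'=2T$ is fixed), the condition $\varepsilon=1\le 1/m$ is exactly the assumption $0<m\le 1$, and $t>C_{\delta_v}/m$ holds eventually. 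Since the asymptotics require $x_t+t$ odd, I would work along the subsequence of such $t$; for the complementary parity both $\tilde a_1(x_t,t,m)$ and $\tilde a_1(x_t',t,m)$ vanish identically and hence do not affect \eqref{dips-lim-condition}, so the relevant sequences are those with $x_t\not\equiv t\pmod 2$.

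Writing $A_t:=\sqrt{2m/\pi}\,(t^2-(1+m^2)x_t^2)^{-1/4}$ and $A_t'$ for the analogous amplitude at $x_t'$, Proposition~\ref{asymptotic-a1-proposition} gives
\[
\tilde a_1(x_t,t,m)-\tilde a_1(x_t',t,m)=A_t\sin\theta(x_t,t,m)-A_t'\sin\theta(x_t',t,m)+O(t^{-3/2}).
\]
The key elementary estimate is that the two amplitudes are interchangeable up to the remainder: since $x_t^2-(x_t')^2=2T(x_t+x_t')=O(t)$ while $t^2-(1+m^2)x_t^2$ is of order $t^2$, the mean value theorem applied to $D\mapsto D^{-1/4}$ yields $A_t-A_t'=O(t^{-3/2})$, whereas $A_t$ itself is of order $t^{-1/2}$ with $A_t\,t^{1/2}\to\sqrt{2m/\pi}\,(1-(1+m^2)v^2)^{-1/4}>0$. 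After replacing $A_t'$ by $A_t$ and applying the sum-to-product identity $\sin\theta(x_t,t,m)-\sin\theta(x_t',t,m)=2\cos\tfrac{\delta_t^+}{2}\sin\tfrac{\delta_t^-}{2}$, I obtain
\[
\tilde a_1(x_t,t,m)-\tilde a_1(x_t',t,m)=2A_t\cos\tfrac{\delta_t^+}{2}\,\sin\tfrac{\delta_t^-}{2}+O(t^{-3/2}).
\]
Because $A_t$ is bounded above and below by constant multiples of $t^{-1/2}$ and $O(t^{-3/2})=o(t^{-1/2})$, condition \eqref{dips-lim-condition} with $d=-1/2$ is equivalent to $\cos\tfrac{\delta_t^+}{2}\sin\tfrac{\delta_t^-}{2}\to 0$.

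It remains to match this product with $\gamma_t$. Writing $\alpha_t:=\rho_{2\pi}(0,\delta_t^-)\in[0,\pi]$ and $\beta_t:=\rho_{2\pi}(\pi,\delta_t^+)\in[0,\pi]$, the definition of $\rho_{2\pi}$ yields the exact identities $\bigl|\sin\tfrac{\delta_t^-}{2}\bigr|=\sin\tfrac{\alpha_t}{2}$ and $\bigl|\cos\tfrac{\delta_t^+}{2}\bigr|=\sin\tfrac{\beta_t}{2}$. Using the two-sided bound $\tfrac{2}{\pi}u\le\sin u\le u$ on $[0,\pi/2]$, the product $\sin\tfrac{\alpha_t}{2}\sin\tfrac{\beta_t}{2}$ tends to $0$ if and only if $\alpha_t\beta_t\to 0$; and for bounded nonnegative sequences $\alpha_t\beta_t\to 0$ if and only if $\min(\alpha_t,\beta_t)\to 0$, from the chain $\min(\alpha_t,\beta_t)^2\le\alpha_t\beta_t\le\pi\min(\alpha_t,\beta_t)$. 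Since $\gamma_t=\min(\alpha_t,\beta_t)$ by definition, this closes the equivalence \eqref{dips-lim-condition} $\Longleftrightarrow\gamma_t\to 0$.

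The main obstacle is the reduction in the second paragraph: one must be certain that the amplitude discrepancy $A_t-A_t'$ together with the two remainder terms from Proposition~\ref{asymptotic-a1-proposition} are all genuinely of size $o(t^{-1/2})$, so that \eqref{dips-lim-condition} detects only the phase factor $\cos\tfrac{\delta_t^+}{2}\sin\tfrac{\delta_t^-}{2}$. The point that makes this work is precisely that $x_t-x_t'=2T$ is a fixed constant, which forces $x_t^2-(x_t')^2=O(t)$ rather than $O(t^2)$ and hence $A_t-A_t'=O(t^{-3/2})$. The remaining steps — the two trigonometric identities and the elementary ``product versus minimum'' comparison — are routine once the metric $\rho_{2\pi}$ is unwound.
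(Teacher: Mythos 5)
Your proposal is correct and follows essentially the same route as the paper: apply Proposition~\ref{asymptotic-a1-proposition}, replace the two amplitudes by their common limit $\sqrt{2m/\pi}\,(1-(1+m^2)v^2)^{-1/4}$, reduce \eqref{dips-lim-condition} to $\sin\theta(x_t,t,m)-\sin\theta(x'_t,t,m)\to 0$, and unwind this via the sum-to-product identity into $\gamma_t\to 0$. In fact you supply details the paper leaves implicit --- the restriction to the parity $x_t\not\equiv t\pmod 2$ required by Proposition~\ref{asymptotic-a1-proposition} (with the trivial vanishing on the complementary parity), the mean-value bound $A_t-A_t'=O(t^{-3/2})$, and the explicit $\rho_{2\pi}$-versus-product comparison behind the final ``which is equivalent to $\gamma_t\to 0$'' --- so no gaps remain.
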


\begin{proof}
Since $x_t/t \to v$, by Proposition \ref{asymptotic-a1-proposition} for large enough $t$ we have
\begin{align*}
	t^{1/2}(\tilde a_1(x_t, t, m) - \tilde a_1(x'_t, t, m)) = \sqrt{\frac{2m}{\pi}} ((1 - (1 + m^2)(x_t/t)^2)^{-1/4} \sin \theta(x_t, t, m)\,- \\
	-\,(1 - (1 + m^2)(x'_t/t)^2)^{-1/4} \sin \theta(x'_t, t, m)) + O_{\delta_v}\left(\frac{1}{m^{1/2}t}\right) = \\
	= \sqrt{\frac{2m}{\pi}} (1 - (1 + m^2)v^2)^{-1/4} (\sin \theta(x_t, t, m) - \sin \theta(x'_t, t, m)) + o(1).
\end{align*}
Thus, $v$ is a dip if and only if
\begin{align*}
	\sin \theta(x_t, t, m) - \sin \theta(x'_t, t, m) \to 0,
\end{align*}
which is equivalent to $\gamma_t \to 0$.
\end{proof}

\begin{lemma}
\label{dips-delta-minus-lemma}
Assume \eqref{dips-seq-estimate} for $w = 1/2$ and $m > 0$. Then the sequence $\delta_t^-$ converges. Moreover, the limit is 0 in the metric $\rho_{2\pi}$ if and only if $v$ fulfills \eqref{dips-v-formula} for $\varepsilon = 1$.
\end{lemma}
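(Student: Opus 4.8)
The plan is to reduce everything to a single difference quotient and the value of one derivative. Write $\varepsilon = 1$ and set $f(\xi) := \arcsin\frac{m}{\sqrt{(1+m^2)(1-\xi^2)}} - \xi\arcsin\frac{m\xi}{\sqrt{1-\xi^2}}$, so that by \eqref{theta-definition} we have $\theta(x,t,m) = t\,f(x/t) + \frac{\pi}{4}$. Then, writing $\xi_t := x_t/t$ and $\xi'_t := x'_t/t = \xi_t - 2T/t$, the $\frac{\pi}{4}$ terms cancel and $\delta_t^- = t\,(f(\xi_t) - f(\xi'_t))$. By the mean value theorem there is $c_t$ between $\xi'_t$ and $\xi_t$ with $f(\xi_t) - f(\xi'_t) = f'(c_t)\,(\xi_t - \xi'_t) = f'(c_t)\cdot\frac{2T}{t}$, whence $\delta_t^- = 2T\,f'(c_t)$. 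Since $x_t = vt + o(t^{1/2})$ gives $\xi_t \to v$ and hence $\xi'_t \to v$ and $c_t \to v$, and since $f$ is real-analytic on $\left(-1/\sqrt{1+m^2},\, 1/\sqrt{1+m^2}\right)$ (the arcsin arguments lie in $(-1,1)$ there), continuity of $f'$ at the interior point $v$ yields $\delta_t^- \to 2T\,f'(v)$. This already establishes convergence of $\delta_t^-$.

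The second step is to compute $f'$ and observe a remarkable cancellation. Differentiating the first arcsin in $f$ gives $\frac{m\xi}{(1-\xi^2)\sqrt{1-(1+m^2)\xi^2}}$, while the product rule applied to the second term produces $\arcsin\frac{m\xi}{\sqrt{1-\xi^2}} + \xi\cdot\frac{m}{(1-\xi^2)\sqrt{1-(1+m^2)\xi^2}}$. The two rational expressions are identical and cancel, leaving the clean formula $f'(\xi) = -\arcsin\frac{m\xi}{\sqrt{1-\xi^2}}$. Thus $\lim_{t\to\infty}\delta_t^- = -2T\arcsin\frac{mv}{\sqrt{1-v^2}}$.

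Finally I translate the condition that this limit is $0$ in the metric $\rho_{2\pi}$. It holds exactly when $2T\arcsin\frac{mv}{\sqrt{1-v^2}} = 2\pi k$ for some $k \in \mathbb{Z}$, i.e. $\arcsin\frac{mv}{\sqrt{1-v^2}} = \frac{\pi k}{T}$, equivalently $\frac{mv}{\sqrt{1-v^2}} = \sin\frac{\pi k}{T}$. Solving this for $v$ (squaring, isolating $v^2$, then fixing the sign via the oddness of $\arcsin$) gives exactly $v = \frac{\sin(\pi k/T)}{\sqrt{m^2 + \sin^2(\pi k/T)}}$, which is \eqref{dips-v-formula} for $\varepsilon = 1$. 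To pin down the range of $k$, note that on the admissible interval the map $v \mapsto \frac{mv}{\sqrt{1-v^2}}$ is an increasing bijection from $\left(-1/\sqrt{1+m^2},\, 1/\sqrt{1+m^2}\right)$ onto $(-1,1)$ (the endpoints map to $\pm 1$), so $\arcsin\frac{mv}{\sqrt{1-v^2}}$ ranges over $\left(-\frac{\pi}{2}, \frac{\pi}{2}\right)$; hence $\frac{\pi k}{T} \in \left(-\frac{\pi}{2}, \frac{\pi}{2}\right)$ forces $k \in \mathbb{Z}\cap\left(-\frac{T}{2}, \frac{T}{2}\right)$, matching the stated index set.

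The main obstacle is the derivative computation in the second step: one must differentiate both nested arcsin--square-root expressions correctly and verify that the two algebraic terms coincide. This is the only delicate point, but it collapses to the elementary identity $f'(\xi) = -\arcsin\frac{m\xi}{\sqrt{1-v^2}}$ (evaluated at $\xi = v$); the convergence via the mean value theorem and the final inversion for $v$ are then routine. Note also that the hypothesis $w = 1/2$ is more than enough here, as the argument only uses $\xi_t \to v$.
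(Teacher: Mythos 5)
Your proof is correct, and it hinges on exactly the same computation that underlies the paper's argument: the identity $f'(\xi) = -\arcsin\frac{m\xi}{\sqrt{1-\xi^2}}$ is precisely the first-order coefficient in the paper's Taylor expansion \eqref{dips-theta-expansion} (which the paper asserts without displaying the cancellation you verify). The difference is in how the phase difference is linearized. The paper writes $\theta(x,t,m) = tL(x/t) + \frac{\pi}{4}$, expands $L$ to second order around $v$, and uses the hypothesis $x_t = vt + o(t^{1/2})$ to make the remainder $O_{\delta_v,m}\left(t\left(\frac{x_t}{t}-v\right)^2\right)$ into $o(1)$; you instead apply the mean value theorem to the exact difference $\delta_t^- = t\left(f(\xi_t)-f(\xi'_t)\right) = 2T f'(c_t)$ and invoke only continuity of $f'$ at the interior point $v$. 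Your route is more economical and, as you observe, proves a slightly stronger statement: for $\delta_t^-$ the exponent $w = 1/2$ plays no role, and any sequence with $x_t = vt + o(t)$ works. What the paper's heavier expansion buys is reusability: the same formula \eqref{dips-theta-expansion} is needed again in Lemma \ref{dips-delta-plus-lemma} for the sum $\delta_t^+$, where the linear terms do not collapse into a difference quotient, the MVT shortcut is unavailable, and $w = 1/2$ is genuinely used. Your final inversion step (solving $\arcsin\frac{mv}{\sqrt{1-v^2}} = \frac{\pi k}{T}$ for $v$ and pinning down $k \in \mathbb{Z} \cap \left(-\frac{T}{2}; \frac{T}{2}\right)$ via the increasing bijection of $v \mapsto \frac{mv}{\sqrt{1-v^2}}$ onto $(-1,1)$) is in fact spelled out more carefully than in the paper, which merely asserts the equivalence with \eqref{dips-v-formula}. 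One cosmetic slip: in your closing paragraph you wrote $f'(\xi) = -\arcsin\frac{m\xi}{\sqrt{1-v^2}}$ with $v$ in place of $\xi$ in the denominator; the earlier, correct formula is the one you actually use.
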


\begin{proof}
We can rewrite $\theta(x, t, m)$ in the following form:
\begin{align*}
    \theta(x, t, m) = t L(x/t) + \frac{\pi}{4} \quad\text{for some}\quad L(v) \in C^2\left(-\frac{1}{\sqrt{1 + m^2}}; \frac{1}{\sqrt{1 + m^2}}\right).
\end{align*}
Write the Taylor expansion of $L(x/t)$ at the point $v$ for $\left|x/t\right| < 1/\sqrt{1 + m^2} - \delta_v$:
\begin{align}
\label{dips-theta-expansion}
	\theta(x, t, m) = \theta(vt, t, m) - t \left(\frac{x}{t} - v\right) \arcsin \frac{mv}{\sqrt{1 - v^2}} + O_{\delta_v, m}\left(t \left(\frac{x}{t} - v\right)^2\right).
\end{align}
Using the estimate $\frac{x_t}{t} - v = o\left(\frac{1}{\sqrt{t}}\right) = \frac{x'_t}{t} - v$ from \eqref{dips-seq-estimate} we get
\begin{align*}
	\delta_t^- = (x'_t - vt) \arcsin \frac{mv}{\sqrt{1 - v^2}} - (x_t - vt) \arcsin \frac{mv}{\sqrt{1 - v^2}} + o(1) = (x'_t - x_t) \arcsin \frac{mv}{\sqrt{1 - v^2}} + o(1) = \\
	= -2T \arcsin \frac{mv}{\sqrt{1 - v^2}} + o(1).
\end{align*}
Thus $\delta_t^- \xrightarrow{2\pi} 0$ if and only if
\begin{align*}
	T \arcsin \frac{mv}{\sqrt{1 - v^2}} = \pi k \quad\text{for some $k \in \mathbb{Z}$},
\end{align*}
which is equivalent to \eqref{dips-v-formula} for $\varepsilon = 1$, assuming that $v \in \left(-\frac{1}{\sqrt{1 + m^2}}; \frac{1}{\sqrt{1 + m^2}}\right)$.
\end{proof}

This proof has a clear physical meaning. The function $L(v)$ is the Lagrangian \cite[end of \S 2.4]{SU2-22}. The arcsine in \eqref{dips-theta-expansion} equals $\partial L / \partial v$, that is, the momentum. Thus the dips indeed occur for those electron velocities which correspond to the momenta $\pi k / T\varepsilon$. 

\begin{lemma}
\label{dips-delta-plus-lemma}
For every integer $T \ge 1$, real $m > 0$, $v \in \left(-\frac{1}{\sqrt{1 + m^2}}; \frac{1}{\sqrt{1 + m^2}}\right)$, and $c \in \mathbb{R} / 2\pi \mathbb{Z}$ there is an integer sequence $x_t$ fulfilling \eqref{dips-seq-estimate} for $w = 1/2$, such that $\delta_t^+ \not\xrightarrow{2\pi} c$.
\end{lemma}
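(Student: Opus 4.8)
The plan is to make the freedom in choosing $x_t$ within the window \eqref{dips-seq-estimate} control $\delta_t^+$ modulo $2\pi$ directly, and to isolate the single value of $v$ for which this freedom degenerates.

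First I would expand $\delta_t^+ = \theta(x_t, t, m) + \theta(x'_t, t, m)$ by applying \eqref{dips-theta-expansion} to both summands. Writing $\alpha := \arcsin \frac{mv}{\sqrt{1 - v^2}} \in (-\pi/2, \pi/2)$ and using $x'_t = x_t - 2T$ together with $x_t/t - v = o(t^{-1/2})$, the quadratic remainders are $O_{\delta_v, m}(t(x_t/t - v)^2) = o(1)$, uniformly over any window $|x_t - vt| \le r_t$ with $r_t = o(\sqrt t)$. This yields
\begin{equation*}
\delta_t^+ = 2tL(v) + \tfrac{\pi}{2} + 2T\alpha - 2\alpha(x_t - vt) + o(1) \pmod{2\pi},
\end{equation*}
so the only $x_t$-dependent term is $-2\alpha x_t$, which changes by a multiple of $4\alpha$ when $x_t$ moves by $2$ (the step preserving the parity $x_t + t$ odd that makes the sequence a genuine competitor in the definition of a dip).

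In the main case $v \ne 0$ we have $\alpha \in (-\pi/2, \pi/2) \setminus \{0\}$, hence $4\alpha \not\equiv 0 \pmod{2\pi}$ and $\eta_0 := \rho_{2\pi}(0, 4\alpha) > 0$. Fix a window radius $r_t = \lfloor t^{1/4} \rfloor = o(\sqrt t)$; for large $t$ it contains at least two integers $n, n + 2$ with $n + t$ odd, and the two corresponding values of $\delta_t^+$ differ by $4\alpha + o(1)$, so their $\rho_{2\pi}$-distance is at least $\eta_0/2$. By the triangle inequality for $\rho_{2\pi}$, at least one of them lies at distance $\ge \eta_0/4$ from the target $c$, and I would let $x_t$ be that choice (taking the nearest parity-correct integer to $vt$ for the finitely many remaining $t$). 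Then $\liminf_t \rho_{2\pi}(\delta_t^+, c) \ge \eta_0/4 > 0$, so $\delta_t^+ \not\xrightarrow{2\pi} c$.

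The degenerate case $v = 0$ is the main obstacle, since then $\alpha = 0$ and the $x_t$-term disappears: every admissible sequence gives $\delta_t^+ = 2tL(0) + \frac{\pi}{2} + o(1)$, with no control left from $x_t$. Here I would argue intrinsically: choosing $x_t = 0$ for odd $t$ and $x_t = 1$ for even $t$ (so $x_t + t$ is odd and $x_t = o(\sqrt t)$), we get $\delta_{t+1}^+ - \delta_t^+ \to 2L(0) = 2\arcsin \frac{m}{\sqrt{1 + m^2}} \in (0, \pi)$, which is nonzero modulo $2\pi$. Since any sequence converging in $\rho_{2\pi}$ must have consecutive differences tending to $0$, the sequence $\delta_t^+$ cannot converge at all, and in particular $\delta_t^+ \not\xrightarrow{2\pi} c$ for every $c$. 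Besides this degeneracy, the only points requiring care are the uniformity of the $o(1)$ remainder across the window (guaranteed by $r_t = o(\sqrt t)$) and keeping the parity $x_t + t$ odd so that the constructed sequence genuinely constrains the dip condition.
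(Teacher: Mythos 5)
Your proposal is correct and follows essentially the same route as the paper: both rest on the expansion \eqref{dips-theta-expansion}, the key observation that shifting $x_t$ by $2$ shifts $\delta_t^+$ by $4\arcsin\frac{mv}{\sqrt{1-v^2}} + o(1)$ (nonzero mod $2\pi$ precisely when $v \neq 0$), and the same separate treatment of $v = 0$ via the non-convergence of $2t\arcsin\frac{m}{\sqrt{1+m^2}} + \frac{\pi}{2}$. The only difference is packaging — you build one adaptive sequence directly (plus parity bookkeeping), whereas the paper argues by contradiction comparing two sequences $y_t, z_t$ with $z_t - y_t = 2$ — which is the same argument in contrapositive form.
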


\begin{proof}
Assume the converse: there is $c \in \mathbb{R} / 2\pi \mathbb{Z}$ such that $\delta_t^+ \xrightarrow{2\pi} c$ for every integer sequence $x_t$ fulfilling \eqref{dips-seq-estimate} for $w = 1/2$. Using expansion \eqref{dips-theta-expansion} and estimate \eqref{dips-seq-estimate} we get
\begin{align}
\label{dips-delta-plus-expansion}
	\delta_t^+ = 2\theta(vt, t, m) + (2vt - x_t - x'_t) \arcsin \frac{mv}{\sqrt{1 - v^2}} + o(1).
\end{align}
Take integer sequences $y_t$ and $z_t$ fulfilling \eqref{dips-seq-estimate}, such that $z_t - y_t = 2$ for all $t$. Due to our assumption,
\begin{align*}
	\delta_t^+(y_t) - \delta_t^+(z_t) \xrightarrow{2\pi} c - c = 0.
\end{align*}
On the other hand, from expansion \eqref{dips-delta-plus-expansion} it follows that
\begin{align*}
	\delta_t^+(y_t) - \delta_t^+(z_t) = 2(z_t - y_t) \arcsin \frac{mv}{\sqrt{1 - v^2}} + o(1) = 4 \arcsin \frac{mv}{\sqrt{1 - v^2}} + o(1).
\end{align*}
Consequently,
\begin{align*}
	\arcsin \frac{mv}{\sqrt{1 - v^2}} = \frac{\pi k}{2} \quad\text{for some $k \in \mathbb{Z}$}.
\end{align*}
Since $v \in \left(-\frac{1}{\sqrt{1 + m^2}}; \frac{1}{\sqrt{1 + m^2}}\right)$, the latter equality clearly holds only for $v = 0$. In the latter case, applying expansion \eqref{dips-delta-plus-expansion} we get
\begin{align*}
	\delta_t^+ = 2\left(t \arcsin \frac{m}{\sqrt{1 + m^2}} + \frac{\pi}{4}\right) + o(1) = \frac{\pi}{2} + 2t \arcsin \frac{m}{\sqrt{1 + m^2}} + o(1) \not\xrightarrow{2\pi} c,
\end{align*}
because $\arcsin \frac{m}{\sqrt{1 + m^2}} \neq \pi k$ for any $k \in \mathbb{Z}$, a contradiction.
\end{proof}

\begin{proof}[Proof of Theorem \ref{dips-a1-theorem}]
Set $\varepsilon = 1$ without loss of generality. If $v$ fulfills \eqref{dips-v-formula}, then by Lemma \ref{dips-delta-minus-lemma} we have
\begin{align*}
	\gamma_t \le \rho_{2\pi}(0, \delta_t^-) \to 0,
\end{align*}
thus $v$ is a dip by Lemma \ref{dips-gamma-lemma}. If $v$ is not described by formula \eqref{dips-v-formula}, then by Lemmas \ref{dips-delta-minus-lemma} and \ref{dips-delta-plus-lemma} applied for $c = \pi$ there is $a > 0$ such that
\begin{align*}
	\rho_{2\pi}(0, \delta_t^-) \to a \quad\text{and}\quad \rho_{2\pi}(\pi, \delta_t^+) \not\to 0,
\end{align*}
thus $\gamma_t \not\to 0$. Therefore, $v$ is not a dip by Lemma \ref{dips-gamma-lemma}.
\end{proof}

\begin{proof}[Proof of Proposition \ref{lowerbound-sharpness-proposition}]
Without loss of generality set $\varepsilon = 1$. By Theorem \ref{dips-a1-theorem} and formula \eqref{dips-v-formula} it follows that the set of dips of an odd order is dense on the interval $\left(-\frac{1}{\sqrt{1 + m^2}}; \frac{1}{\sqrt{1 + m^2}}\right)$. Therefore, there is a dip $v > v_0$ of order $T \mathrel{\not\vdots} 2$.

By Lemma \ref{dips-delta-plus-lemma} for $c = 0$ there is a sequence $x_t = vt + o(t^{1/2})$ such that $\delta_t^+ \not\xrightarrow{2\pi} 0$. By Lemma \ref{dips-delta-minus-lemma} we have $\delta_t^- \xrightarrow{2\pi} 0$. Adding the two sequences, we get $\delta_t^+ + \delta_t^- = 2\theta(x_t, t, m) \not\xrightarrow{2\pi} 0$, meaning that $\theta(x_t, t, m) \not\xrightarrow{\pi} 0$; hence $\sin \theta(x_t, t, m) \not\to 0$. By Proposition \ref{asymptotic-a1-proposition} it follows that
\begin{align*}
	\tilde a_1(x_t, t, m) \neq o(t^{-1/2}).
\end{align*}
Since $v$ is a dip of order $T$,
\begin{align*}
	\tilde a_1(x_t, t, m) - \tilde a_1(x_t - 2T, t, m) = o(t^{-1/2}).
\end{align*}
By the latter two estimates it follows that there is an arbitrarily large $t$ such that
\begin{align*}
	\mathrm{sgn}(\tilde a_1(x_t, t, m)) = \mathrm{sgn}(\tilde a_1(x_t - 2T, t, m)).
\end{align*}
Therefore, since $T$ is odd, either $x = x_t$ or $x = x_t - 2T$ satisfies the required conditions.
\end{proof}

\section*{Open problems}

We conjecture that the functions $\tilde a_2(x, t, m, \varepsilon)$ and $P(x, t, m, \varepsilon)$ have the same positions of the dips as $\tilde a_1(x, t, m, \varepsilon)$ (with the depth exponent $-1$ for the case of $P(x, t, m, \varepsilon)$). The proof should be analogous, using the asymptotic formulae \cite[Theorem 2]{SU-22} and \cite[Theorem 1]{Sunada-Tate-12} although more technical.

It is interesting to generalize Theorem \ref{oscillation-middle-theorem} to arbitrary $m$ and $\varepsilon$ and find an explicit bound of $w_0$ in Theorem \ref{young-middle-theorem}. Note that we have proved that for each fixed $x$ and all sufficiently large $t$, the value $\tilde a_1(x, t)$ is not zero (Corollary \ref{a1-nonzero-corollary}); the problem to find all zeros of the function $\tilde a_1(x, t)$ remains open; cf. \cite[Theorem 1]{Novikov-22}. We finish the work by the following conjecture.

\begin{conjecture}
Let $x$, $t$ be integers, $t \ge 1$, $-t + 2 < x < t$, $x + t \mathrel{\vdots} 2$. Then $\operatorname{Re} a(x, t) \neq 0$ and $\operatorname{Im} a(x, t) \neq 0$ unless $(x, t) \in \{(-3, 11), (5, 11)\}$ or $x \in \{0, 2\}$.
\end{conjecture}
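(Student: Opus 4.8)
The plan is to convert the conjecture into a statement about two \emph{consecutive} coefficients of a single polynomial, isolate the systematic vanishing families, and then confront the genuinely Diophantine core. Writing $[z^k]g$ for the coefficient of $z^k$ in a polynomial $g$, and setting $A=\tfrac{t+x}{2}-1$, $B=\tfrac{t-x}{2}-1$ (so $A,B\ge 0$ and $A+B=t-2$), I would first establish the identities
\begin{align*}
2^{(t-1)/2}\operatorname{Re} a(x,t) &= [z^{B}]\,(1-z)^{A}(1+z)^{B}, & 2^{(t-1)/2}\operatorname{Im} a(x,t) &= [z^{B+1}]\,(1-z)^{A}(1+z)^{B}.
\end{align*}
The first is exactly Proposition \ref{middle-values-proposition} rewritten; the second is its imaginary-part analogue, proved by the same device: expand a checker path through $(1,1)$ into its maximal monotone runs, so that a path with $R=(t+x)/2$ right-steps, $L=(t-x)/2$ left-steps and $m$ runs has $\mathrm{turns}=m-1$ and multiplicity $\binom{R-1}{\lceil m/2\rceil-1}\binom{L-1}{\lfloor m/2\rfloor-1}$, and then collect the terms of \eqref{a-definition} by $m \bmod 4$ (even $m$ feed the real part, odd $m$ the imaginary part). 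After this reduction the conjecture reads: the two coefficients $c_B:=[z^B]f$ and $c_{B+1}:=[z^{B+1}]f$ of $f(z)=(1-z)^A(1+z)^B$ are nonzero, apart from the listed exceptions.

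Next I would clear away the systematic families and the boundary. For $x=0$ one has $A=B$ and $f=(1-z^2)^{B}$, which contains only even powers, so $c_B=0$ exactly for $B$ odd ($t\equiv 0 \bmod 4$) and $c_{B+1}=0$ exactly for $B$ even ($t\equiv 2\bmod 4$); thus at $x=0$ one part always vanishes. For $x=2$ one has $f=(1-z)^2(1-z^2)^{B}$, and a direct evaluation of $[z^{B+1}]f$ shows $\operatorname{Im} a(2,t)=0$ precisely when $t\equiv 2\bmod 4$; these give the two excluded columns (note the exclusions $x=2$ and $(x,t)\in\{(5,11),(-3,11)\}$ come from the imaginary part, the real part vanishing only at $x=0$). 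The region $|x|/t\ge 1/\sqrt2$ is then handled with no further effort: both $\operatorname{Re} a(x,t)=\tilde a_1(x,t-1)$ and $\operatorname{Im} a(x,t)=\tilde a_2(x-1,t-1)$ are covered by Theorem \ref{oscillation-outside-theorem} (with $m=\varepsilon=1$), which assigns them a definite sign and hence non-vanishing near the angle sides.

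The remaining range $|x|/t<1/\sqrt2$, $x\notin\{0,2\}$, is where the real difficulty lies. Here I would pass to the Jacobi representation $c_B=2^{B}P_{B}^{(|x|,0)}(0)$ (and a companion Jacobi value at the origin for $c_{B+1}$) and try to prove these integers are nonzero. The obstacle is fundamental: such Jacobi values at $0$ \emph{do} vanish sporadically, as the zeros $\operatorname{Im} a(5,11)=\operatorname{Im} a(-3,11)=0$ already show, so no monotonicity or fixed-sign argument of the type used in Theorem \ref{oscillation-outside-theorem} can possibly work throughout the bulk. The realistic route is effective. One sharpens Corollary \ref{a1-nonzero-corollary} and its $\tilde a_2$-analogue into a quantitative non-vanishing statement, valid for all $t$ above an explicit threshold and $|x|/t\le 1/\sqrt2-\delta$, by using the asymptotics of Proposition \ref{asymptotic-a1-proposition} together with the dips analysis of Section \ref{dips-section} to show that at integer lattice points the phase $\theta(x,t)$ stays away from $\pi\mathbb{Z}$ by more than the $\asymp 1/t$ that the error term would require. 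One then treats the transition zone $|x|/t\approx 1/\sqrt2$ separately, where the amplitude $(t^2-2x^2)^{-1/4}$ degenerates and Airy-type behaviour takes over, by a uniform estimate; and one finally reduces everything to a finite computer check of a bounded region of $(x,t)$, during which the two sporadic zeros are exhibited and shown to be the only ones.

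I expect two points to be the hard part: the uniform control of $\theta$ and of the amplitude across the caustic $x/t=1/\sqrt2$, and the effectivity needed to collapse the infinite bulk to a genuinely finite verification (Corollary \ref{a1-nonzero-corollary} as stated is ineffective and its threshold grows with $x$). A possible purely arithmetic alternative for the exact non-vanishing, bypassing the analysis, would be a $p$-adic argument in the spirit of Kummer and Lucas: for each $(A,B)$ choose a prime $p$ forcing the $p$-adic valuation of $c_B$ or $c_{B+1}$ to be finite, reading off carries in the binomial sums. The three-term recurrence for the coefficients coming from $(1-z^2)f'=((B-A)-(A+B)z)f$ (a cousin of Proposition \ref{recurrence-proposition}) should help organize such valuation bookkeeping, but selecting and controlling the prime uniformly in $(A,B)$ is itself delicate and is, I suspect, where this alternative would stall.
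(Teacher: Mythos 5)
The statement you are attacking is not a theorem of the paper: it appears in the ``Open problems'' section as a conjecture, explicitly left unproved, so there is no proof of record to compare against. Judged on its own terms, your text is a research programme rather than a proof, and you say as much yourself. The parts you actually carry out are correct and worthwhile: the identity $2^{(t-1)/2}\operatorname{Re}a(x,t)=[z^{B}](1-z)^{A}(1+z)^{B}$ with $A=\frac{t+x}{2}-1$, $B=\frac{t-x}{2}-1$ is indeed Proposition \ref{middle-values-proposition} rewritten (and checks out against Example \ref{boundary-values-example}: for $x=t-2$ it gives $1$ and $2-t$); the analysis of the columns $x=0$ and $x=2$ and the verification that $\operatorname{Im}a(5,11)=\operatorname{Im}a(-3,11)=0$ are right; and Theorem \ref{oscillation-outside-theorem} does dispose of most of the outer region (though note that for $x<0$ the imaginary part is \emph{not} covered: the paper states the symmetry of Proposition \ref{a1-symmetry-proposition} only for $\tilde a_1$, and $\tilde a_2$ obeys at best a skew symmetry that you would have to prove or import from elsewhere).

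The genuine gap is the entire bulk $|x|/t<1/\sqrt2$, which is exactly where the conjecture lives. Your plan rests on three unproved ingredients. First, an effective, uniform-in-$x$ version of Corollary \ref{a1-nonzero-corollary}: the mechanism behind that corollary (Lemma \ref{neighbour-relation-lemma}) is a limit argument for \emph{fixed} $x$, with no rate, and nothing in the paper converts it into a bound valid when $x$ grows with $t$. Second, your key analytic claim --- that at integer lattice points the phase $\theta(x,t)$ stays at distance $\gg 1/t$ from $\pi\mathbb{Z}$ --- is a Diophantine equidistribution statement about the transcendental phase \eqref{theta-definition}; the dips analysis of Section \ref{dips-section} identifies \emph{along which rays} the oscillation of $\tilde a_1$ can degenerate, but it gives no lower bound on $|\sin\theta(x,t)|$ at individual lattice points, and the very existence of the zeros at $(5,11)$ and $(-3,11)$ shows that any such bound must admit exceptions, so ``only finitely many exceptions'' is itself the hard content, not a corollary of known estimates. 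Third, the caustic $|x|/t\approx 1/\sqrt2$: the error term in Proposition \ref{asymptotic-a1-proposition} carries a constant $C_\delta$ that blows up as $\delta\to 0$, the Airy-regime reference \cite{Zakorko-22} is cited by the paper without quantitative content usable here, and in this zone you have neither the asymptotics nor the monotonicity of Theorem \ref{oscillation-outside-theorem}. Until these three steps are proved and the ``bounded region'' for the final computer check is made explicit, what you have is a correct reduction plus a plausible strategy --- in effect the conjecture restated in sharper form --- not a proof.
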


\section*{Acknowledgements}

This paper was prepared within the framework of the HSE University Basic Research Program and supported by project group ``Lattice models'' in Faculty of Mathematics. We are grateful to Mikhail Skopenkov for inspiration and organization of this work and to Ivan Novikov for useful discussions.

\end{document}